\PassOptionsToPackage{round}{natbib}
\PassOptionsToPackage{pagebackref=true,hypertexnames=false}{hyperref}

\let\savedBibliographyStyle\bibliographystyle
\renewcommand{\bibliographystyle}[1]{}
\documentclass[10pt,twoside]{predoc2}
\let\bibliographystyle\savedBibliographyStyle
\hypersetup{allcolors=black}
\usepackage{mathtools,mathrsfs,bm,graphicx,placeins,float,etoolbox}
\input{ursfs.fd}
\DeclareFontShape{U}{rsfs}{m}{n}{%
  <5> <6> rsfs5
  <5.5> sfixed*[5]rsfs5
  <7> rsfs7
  <8> <9> <10> <10.95> <12> <14.4> <17.28> <20.74> <24.88> rsfs10
}{}
\usepackage[export]{adjustbox}
\usepackage[round]{natbib}
\usepackage{multibib}
\DeclareRobustCommand{\doi}[1]{%
  doi:\space\begingroup\urlstyle{rm}%
  \href{https://doi.org/#1}{\nolinkurl{#1}}\endgroup
}
\usepackage[capitalize,nameinlink]{cleveref}

\renewcommand*{\backref}[1]{}
\renewcommand*{\backrefalt}[4]{\ifnum#1>0 (#2)\fi}

\DeclareRobustCommand{\closedformsecref}{\ref{sec:closed-form-broja}}

\newtheoremstyle{defnoparen}%
  {3pt}{3pt}%
  {\normalfont}{}%
  {\bfseries}{.}%
  {.5em}%
  {\thmname{#1}\thmnumber{ #2}\thmnote{ \normalfont #3}}
\newtheoremstyle{plainnoparen}%
  {3pt}{3pt}%
  {\itshape}{}%
  {\bfseries}{.}%
  {.5em}%
  {\thmname{#1}\thmnumber{ #2}\thmnote{ \normalfont #3}}

\theoremstyle{defnoparen}
\newtheorem{definition}{Definition}
\newtheorem{assumption}{Assumption}
\newtheorem{remark}{Remark}

\theoremstyle{plainnoparen}
\newtheorem{lemma}{Lemma}
\newtheorem{proposition}{Proposition}
\newtheorem{theorem}{Theorem}
\newtheorem{corollary}{Corollary}

\AddToHook{env/definition/begin}{\crefalias{definition}{definition}}
\AddToHook{env/assumption/begin}{\crefalias{assumption}{assumption}}
\AddToHook{env/remark/begin}{\crefalias{remark}{remark}}
\AddToHook{env/lemma/begin}{\crefalias{lemma}{lemma}}
\AddToHook{env/proposition/begin}{\crefalias{proposition}{proposition}}
\AddToHook{env/theorem/begin}{\crefalias{theorem}{theorem}}
\AddToHook{env/corollary/begin}{\crefalias{corollary}{corollary}}

\newcommand{\Cov}{\operatorname{Cov}}

\newcommand{\BROJA}{\mathrm{BROJA}}
\newcommand{\UI}{\operatorname{UI}}
\newcommand{\RI}{\operatorname{RI}}
\newcommand{\SI}{\operatorname{SI}}

\usepackage{booktabs,multirow,threeparttable}
\newcites{fmri}{SUPPLEMENTAL BIBLIOGRAPHY}

\crefname{assumption}{Assumption}{Assumptions}
\Crefname{assumption}{Assumption}{Assumptions}
\newsavebox{\comparisonTableBox}
\newif\ifcomparisonTable
\AtBeginEnvironment{table}{\comparisonTabletrue}
\BeforeBeginEnvironment{tabular}{\ifcomparisonTable\begingroup\lrbox{\comparisonTableBox}\fi}
\AfterEndEnvironment{tabular}{\ifcomparisonTable\endlrbox\endgroup\ifdim\wd\comparisonTableBox>\linewidth\resizebox{\linewidth}{!}{\usebox{\comparisonTableBox}}\else\usebox{\comparisonTableBox}\fi\fi}
\AtBeginDocument{\setlength{\abovedisplayskip}{4pt plus 1pt minus 1pt}\setlength{\belowdisplayskip}{4pt plus 1pt minus 1pt}\setlength{\abovedisplayshortskip}{2pt}\setlength{\belowdisplayshortskip}{3pt}\setlength{\jot}{2pt}}

\newsavebox{\preprintFigureBox}
\NewDocumentCommand{\preprintNaturalGraphic}{O{} m}{%
  \sbox{\preprintFigureBox}{\preprintOriginalGraphics[#1]{#2}}%
  \ifdim\wd\preprintFigureBox>\linewidth
    \makebox[\linewidth][c]{\usebox{\preprintFigureBox}}%
  \else\usebox{\preprintFigureBox}\fi
}
\AtBeginEnvironment{figure}{\let\preprintOriginalGraphics\includegraphics\let\includegraphics\preprintNaturalGraphic}
\usepackage{wrapfig}

\usepackage{noindentafter}
\NoIndentAfterEnv{equation}
\NoIndentAfterEnv{equation*}
\NoIndentAfterEnv{align}
\NoIndentAfterEnv{align*}
\NoIndentAfterEnv{figure}
\NoIndentAfterEnv{figure*}
\NoIndentAfterEnv{table}
\NoIndentAfterEnv{table*}
\expandafter\NoIndentAfterCmd\csname]\endcsname

\makeatletter
\DeclareRobustCommand{\preprintAffiliationMark}[1]{%
  {\m@th\ensuremath{^{\mbox{\fontsize\sf@size\sf@size\selectfont\normalfont#1}}}}%
}
\renewcommand{\AB@authnote}[1]{\preprintAffiliationMark{#1}}
\renewcommand{\AB@affilnote}[1]{\preprintAffiliationMark{#1}}
\xpatchcmd{\affil}{\if\relax#1\relax}{\if\relax\detokenize{#1}\relax}{}{}
\makeatother
\title{Gaussian BROJA Partial Information Decomposition in Closed Form}
\author[2,3]{Ohad Sheelo}
\author[1,2,3,\textdagger]{Tal Golan}
\author[1,2,3,\textdagger]{Ari Pakman}
\date{}
\affil[1]{Department of Industrial Engineering and Management, Ben-Gurion University of the Negev, Israel}
\affil[2]{School of Brain Sciences and Cognition, Ben-Gurion University of the Negev, Israel}
\affil[3]{Data Science Research Center, Ben-Gurion University of the Negev, Israel}
\affil[\textdagger]{These authors jointly supervised this work.}
\makeatletter
\def\runauthor{Ohad Sheelo, Tal Golan, and Ari Pakman}
\def\runtitle{Gaussian BROJA Partial Information Decomposition in Closed Form}
\makeatother
\begin{document}
\addtocontents{toc}{\protect\setcounter{tocdepth}{-1}}
\maketitle
\begin{abstract}
  The BROJA partial information decomposition provides an operationally motivated way to separate the information that two sources carry about a target into unique, redundant, and synergistic components. For continuous and high-dimensional variables, however, computing the BROJA decomposition requires solving a difficult optimization problem. Previous work showed that, when each source--target pair is jointly Gaussian, the problem can be reduced to finite-dimensional numerical optimization. Here, we derive a closed-form solution to this optimization problem, enabling direct computation of the BROJA decomposition in the general multivariate Gaussian setting. The resulting closed-form expression enables speedups exceeding two orders of magnitude for high-dimensional Gaussian BROJA PID. Code is available at \url{https://github.com/brainsandmachines/Eigen_PID}.
\end{abstract}
\section{INTRODUCTION}

\begin{wrapfigure}[23]{r}{3.23in}
  \vspace{-10pt}
  \includegraphics{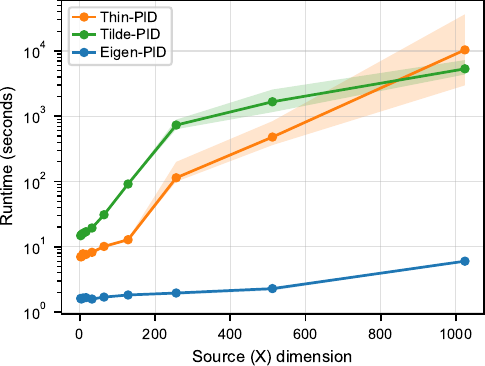}
  \caption{Runtime comparison of Tilde-PID \citep{venkatesh2023}, Thin-PID \citep{zhao2025} and Eigen-PID (ours), varying the dimensionality of source \(X\) ($d_X = 2^k$ for $k=1\ldots 10$) while holding the dimensions of source \(Y\) and target \(M\) fixed at~1000. Experimental details appear in \cref{subsec:supp-runtime-comparisons}. Lines show the median runtime across repetitions, and shaded regions indicate the 25th--75th percentile range.}%
  \label{fig:runtime_speedup}
\end{wrapfigure}

Many questions in science and engineering concern how several variables jointly inform about a quantity of interest: whether two inputs carry the same information about an output, whether each contributes something the other lacks, or whether the information emerges only from their combination. Mutual information alone cannot distinguish these situations. Partial information decomposition (PID), introduced by \citet{williamsbeer2010}, addresses this by splitting the total information that a set of sources provides about a target into redundant, unique, and synergistic parts. Originally developed for the analysis of neural and complex systems, PID has recently become a tool in machine learning: it has been used to quantify and model the interactions between modalities in multimodal learning \citep{liang2023quantifying}, to characterize the complexity of the representations learned by deep networks \citep{ehrlich2023measure}, to analyze how synergy supports multi-task learning in neural networks \citep{proca2024synergistic}, to define feature relevance and guide feature selection \citep{wollstadt2023rigorous,westphal2025partial}, and to dissect the sources of unfairness in federated learning \citep{hamman2024demystifying}. A central practical obstacle in these applications is the computation of PID for continuous, high-dimensional variables, which motivates the present work.

In this paper we show that, for the widely used BROJA decomposition of \citet{bertschinger2014quantifying}, this obstacle disappears in the Gaussian case: the BROJA components admit a closed-form solution in terms of the conditional covariances of the target given each source. The resulting procedure, which we call Eigen-PID, requires no iterative optimization and matches existing Gaussian PID methods at a small fraction of their runtime.

The paper is organized as follows.~\cref{sec:background} reviews PID and the BROJA optimization,~\cref{sec:preliminaries} introduces the tools used in our analysis, and~\cref{sec:closed-form-broja} derives the closed-form solution. \cref{sec:related-work} discusses how this result relates to and extends prior work, and~\cref{sec:numerical} presents numerical comparisons with existing methods. \cref{sec:supp-identities,sec:supp-preliminaries,sec:supp-section3,sec:supp-numerical,sec:fmri} contain standard identities, auxiliary lemmas, additional proofs, additional experiments, and an application to real neuroimaging data, respectively.

\section{BACKGROUND}
\label{sec:background}
Formally, in the bivariate setting, PID refines the information that two sources \(X\) and \(Y\) carry about a target \(M\) into four components satisfying
\begin{equation}
  \label{eq:pid-decomposition}
  I(M;X,Y) = \operatorname{RI} + \operatorname{UI}_X + \operatorname{UI}_Y + \operatorname{SI},
\end{equation}
together with
\begin{equation}
  \label{eq:pid-marginals}
  I(M;X) = \operatorname{RI} + \operatorname{UI}_X, \qquad I(M;Y) = \operatorname{RI} + \operatorname{UI}_Y.
\end{equation}
Here, \(\operatorname{RI}\) denotes the redundant information about \(M\), available from either source; \(\operatorname{UI}_X\) and \(\operatorname{UI}_Y\) denote the information available uniquely from \(X\) and from \(Y\), respectively; and \(\operatorname{SI}\) denotes the synergistic information, available only from the sources jointly.

The information contained in the unique and redundant components, excluding synergy, is referred to as the union information \citep{griffith2014quantifying}:
\begin{equation}
  I_{\cup}(M;X,Y)
  :=
  \operatorname{RI}
  +
  \operatorname{UI}_X
  +
  \operatorname{UI}_Y.
\end{equation}
Equivalently,
\begin{equation}
  I(M;X,Y)
  =
  I_{\cup}(M;X,Y)
  +
  \operatorname{SI}.
\end{equation}
Thus, the union information represents the portion of the total information about \(M\) that is not synergistic.

A central approach to defining the individual PID components was proposed by \citet{bertschinger2014quantifying}, commonly referred to as the BROJA PID. Their construction rests on an assumption, labeled~\((*)\) in their paper, which states that the unique and redundant information depend only on the two source--target marginals \(P_{MX}\) and \(P_{MY}\), rather than on the full joint distribution \(P_{MXY}\). Accordingly, given two prescribed source--target marginals with a common target marginal \(P_M\), one considers the following set of joint distributions.

\begin{definition}[Compatible couplings]
  \label{def:compatible-couplings}
  The set of joint distributions compatible with the two prescribed target--source marginals is
  \begin{equation}
    \Delta(P)=
    \{Q_{MXY}:Q_{MX}\!=P_{MX},\ Q_{MY}\!=P_{MY}\}.
    \label{eq:compatible-couplings}
  \end{equation}
  Each $Q\in\Delta(P)$ is called a compatible coupling: it specifies the dependence between $X$ and $Y$ while preserving both target--source marginals.
\end{definition}

In particular, \(I_Q(M;X)=I(M;X)\) and \(I_Q(M;Y)=I(M;Y)\) for every \(Q\in\Delta(P)\), so that, under assumption~\((*)\), the redundant and unique information are invariant across \(\Delta(P)\), whereas the synergistic information \(\operatorname{SI}_Q\) may change with the coupling between the sources. Since, for every \(Q\in\Delta(P)\),
\begin{align}
  I_Q(M;X\mid Y)
  \label{eq:conditional-mutual-info}
   & =
  I_Q(M;X,Y)-I(M;Y)
  \\
   & =
  \operatorname{UI}_X+\operatorname{SI}_Q,
\end{align}
minimizing the conditional mutual information over \(\Delta(P)\) amounts to minimizing the synergistic component over the admissible couplings between the sources. This motivates the following definitions.

\begin{definition}[BROJA unique and union information]
  \label{def:broja-info}
  The BROJA unique information carried by $X$ relative to $Y$ \citep{bertschinger2014quantifying} is
  \begin{equation}
    \operatorname{UI}^{\mathrm{BROJA}}_X
    :=
    \inf_{Q\in\Delta(P)} I_Q(M;X\mid Y).
    \label{eq:broja-unique}
  \end{equation}
  The associated marginal-constrained union information \citep{griffith2014quantifying} is
  \begin{equation}
    I_{\cup}^{\mathrm{BROJA}}(M;X,Y)
    :=
    \inf_{Q\in\Delta(P)} I_Q(M;X,Y).
    \label{eq:broja-union}
  \end{equation}
\end{definition}

The unique information of \(Y\) is defined analogously, interchanging \(X\) and \(Y\). Combining (\ref{eq:conditional-mutual-info}), (\ref{eq:broja-unique}) and (\ref{eq:broja-union}), we obtain that the BROJA unique and union informations satisfy, whenever $I(M;Y)<\infty$,
\begin{equation}
  \operatorname{UI}^{\mathrm{BROJA}}_X
  =
  I_{\cup}^{\mathrm{BROJA}}(M;X,Y)-I(M;Y).
  \label{eq:ui-union-relation}
\end{equation}
Evaluating the BROJA unique-information components is therefore equivalent to evaluating the union-information minimum, which is the optimization solved in this paper.

In the Gaussian setting with \(d_M=1\), \citet{barrett2015} showed that the BROJA PID reduces to the minimum-mutual-information decomposition,
\begin{equation}
  \mathrm{RI}^{\mathrm{BROJA}}
  =
  \min\{I(M;X),I(M;Y)\},
\end{equation}
and therefore
\begin{equation}
  \begin{aligned}
    \mathrm{UI}^{\mathrm{BROJA}}_X
     & =
    [I(M;X)-I(M;Y)]_+, \\
    \mathrm{UI}^{\mathrm{BROJA}}_Y
     & =
    [I(M;Y)-I(M;X)]_+,
  \end{aligned}
\end{equation}
where \([a]_+:=\max\{a,0\}\). Hence, at least one of the two sources necessarily has zero unique information. For \(d_M>1\), however, no closed-form expression for the BROJA minimum was previously known, and for continuous, high-dimensional variables, optimizing over the full space \(\Delta(P)\) is computationally difficult. To address the Gaussian setting, \citet{venkatesh2023} considered jointly Gaussian marginals \(P_{MX}\) and \(P_{MY}\) and restricted the feasible coupling itself to be jointly Gaussian. Define
\begin{equation*}
  \Delta_G(P)
  :=
  \left\{
  Q\in\Delta(P):
  Q_{MXY}\ \text{is jointly Gaussian}
  \right\}.
\end{equation*}
The restricted union-information problem
\begin{equation}
  \inf_{Q\in\Delta_G(P)} I_Q(M;X,Y),
\end{equation}
can be reduced to a finite-dimensional optimization over the Gaussian coupling between the sources and evaluated using a projected-gradient algorithm. Since \(\Delta_G(P)\subseteq\Delta(P)\), this restriction initially provided only
\begin{equation}
  I_{\cup}^{\mathrm{BROJA}}(M;X,Y)
  \le
  \inf_{Q\in\Delta_G(P)} I_Q(M;X,Y),
\end{equation}
and it remained unknown whether restricting the optimization to jointly Gaussian couplings could increase the optimum.

\citet{zhao2025} subsequently resolved this question. They proved that when the two prescribed source-target marginals \(P_{MX}\) and \(P_{MY}\) are jointly Gaussian, a jointly Gaussian optimizer exists for the unrestricted problem. Consequently,
\begin{equation}
  \inf_{Q\in\Delta(P)}
  I_Q(M;X,Y)
  =
  \inf_{Q\in\Delta_G(P)}
  I_Q(M;X,Y).
\end{equation}
They further introduced Thin-PID, a more efficient gradient-based method for solving the resulting Gaussian optimization. Therefore, the unrestricted BROJA problem with Gaussian source--target marginals is known to reduce exactly to a finite-dimensional Gaussian optimization. Nevertheless, existing methods still require iterative numerical optimization. The remaining question is whether this population minimum admits a closed-form solution.

\subsection*{Main contribution}
To state our result, let \(\Sigma_M\) denote the covariance of \(M\), and let \(\Gamma_X\) and \(\Gamma_Y\) denote the conditional covariances of \(M\) given \(X\) and given \(Y\), respectively. Under the Gaussian assumption these are constant matrices, given explicitly in \cref{sec:preliminaries}. Let us define the eigendecomposition
\begin{align}
   & \Gamma_X^{-1/2}\Gamma_Y\Gamma_X^{-1/2}
  =
  U\Lambda U^\top,
  \\
   & \text{with } \,\, \Lambda=\operatorname{diag}(\lambda_1,\ldots,\lambda_{d_M}),
\end{align}
and \(U\) orthogonal, and define
\begin{equation}
  \begin{aligned}
    \Gamma^\star
     & :=
    \Gamma_X^{1/2}\,U D\,U^\top\Gamma_X^{1/2}, \\
    D
     & :=
    \operatorname{diag}\bigl(\min\{1,\lambda_1\},\ldots,\min\{1,\lambda_{d_M}\}\bigr).
  \end{aligned}
\end{equation}
In this paper, we prove that the Gaussian BROJA union information \eqref{eq:broja-union} admits the closed form
\begin{equation}
  I_{\cup}^{\mathrm{BROJA}}(M;X,Y)
  =
  \frac{1}{2}
  \log_2
  \frac{\det\Sigma_M}{\det\Gamma^\star},
\end{equation}
and, consequently, that the BROJA unique information components \eqref{eq:broja-unique} are
\begin{equation}
  \begin{aligned}
    \mathrm{UI}^{\mathrm{BROJA}}_X
     & =
    \frac{1}{2}
    \sum_{i=1}^{d_M}[\log_2\lambda_i]_+, \\
    \mathrm{UI}^{\mathrm{BROJA}}_Y
     & =
    \frac{1}{2}
    \sum_{i=1}^{d_M}[-\log_2\lambda_i]_+.
  \end{aligned}
\end{equation}
Compared with iterative numerical methods, our closed-form result has two immediate advantages:
\begin{enumerate}
  \item \textbf{Optimality guarantee:} It evaluates the exact population minimum directly, without initialization, convergence, or stopping criteria.

  \item \textbf{Efficiency:} It substantially reduces computation time as the dimensionality increases. In our simulations the resulting speedup exceeds \(100\times\) in high-dimensional settings, while the PID values coincide with those of the existing numerical methods.
\end{enumerate}

\FloatBarrier

\newpage
\section{PRELIMINARIES}
\label{sec:preliminaries}

All logarithms in information quantities are base $2$. Let $M\in\mathbb{R}^{d_M}$ be the target, and let $X\in\mathbb{R}^{d_X}$ and $Y\in\mathbb{R}^{d_Y}$ be two sources. We are given two target--source distributions $P_{MX}$ and $P_{MY}$ with the same target marginal $P_M$.

\subsection{Gaussian Setup}
\label{subsec:broja-gaussian-setup}
{\bf Notation:} For symmetric matrices $A$ and $B$, $A\preceq B$ denotes the
Loewner order, meaning that $B-A$ is positive semidefinite, i.e.,
$u^\top(B-A)u\geq0$ for every vector $u$.
Similarly, $A\prec B$ denotes that $B-A$ is positive definite.

\begin{assumption}[Nondegenerate Gaussian target--source marginals]
  \label{ass:gaussian-marginals}
  The prescribed marginals $P_{MX}$ and $P_{MY}$ are nondegenerate jointly Gaussian distributions with a common target marginal
  \begin{equation}
    M\sim\mathcal{N}(\mu_M,\Sigma_M),
    \qquad
    \Sigma_M\succ0.
    \label{eq:gaussian-target}
  \end{equation}
  Here, nondegeneracy means that the joint covariance matrices $\Sigma_{(M,X)}$ and $\Sigma_{(M,Y)}$ are positive definite. No Gaussian assumption is imposed on an arbitrary compatible coupling $Q_{MXY}\in\Delta(P)$.
\end{assumption}

Define the conditional covariance matrices
\begin{equation}
  \Gamma_X:=\operatorname{Cov}(M\mid X),
  \qquad
  \Gamma_Y:=\operatorname{Cov}(M\mid Y),
  \label{eq:posterior-covariances}
\end{equation}
given by the Schur complements
\begin{equation}
  \begin{aligned}
    \Gamma_X
     & =
    \Sigma_M-\Sigma_{MX}\Sigma_X^{-1}\Sigma_{XM},
    \\
    \Gamma_Y
     & =
    \Sigma_M-\Sigma_{MY}\Sigma_Y^{-1}\Sigma_{YM}.
  \end{aligned}
  \label{eq:posterior-schur}
\end{equation}
The formulas are standard; see \citet{anderson2003}. Nondegeneracy (\cref{ass:gaussian-marginals}), and the strict form of~\eqref{eq:schur-criterion-R} in \ref{supp:definition} imply the left inequalities below,
\begin{equation}
  0\prec\Gamma_X\preceq\Sigma_M,
  \qquad
  0\prec\Gamma_Y\preceq\Sigma_M,
  \label{eq:posterior-order}
\end{equation}
and the right ones hold since $\Sigma_{MX}\Sigma_X^{-1}\Sigma_{XM}\succeq0$ and $\Sigma_{MY}\Sigma_Y^{-1}\Sigma_{YM}\succeq0$.

\subsection{Gaussian and Matrix Tools}
\label{subsec:gaussian-matrix-tools}

We recall here some standard matrix-analysis definitions and results. See, e.g., \citet{hornjohnson2012}.

\noindent\textbf{Generalized eigenvalues.}
Since $\Gamma_X$ and $\Gamma_Y$ are positive definite, their
generalized eigenvalues are the positive scalars
$\lambda_1,\ldots,\lambda_{d_M}$ satisfying
\begin{equation}
  \Gamma_Yv_i=\lambda_i\Gamma_Xv_i
  \label{eq:generalized-eigenvalues}
\end{equation}
for generalized eigenvectors $v_i \neq 0$. Equivalently, let
\begin{align}
   & \Gamma_X^{-1/2}\Gamma_Y\Gamma_X^{-1/2}
  \label{eq:symmetric-eigendecomposition}
  =
  U\Lambda U^\top,
  \\
   & \Lambda :=\operatorname{diag}(\lambda_1,\ldots,\lambda_{d_M}),
\end{align}
where $U$ is orthogonal. Indeed, with $w_i=\Gamma_X^{1/2}v_i$, Equation~\eqref{eq:generalized-eigenvalues} is equivalent to $\Gamma_X^{-1/2}\Gamma_Y\Gamma_X^{-1/2}w_i=\lambda_i w_i$, and the matrix $\Gamma_X^{-1/2}\Gamma_Y\Gamma_X^{-1/2}$ is symmetric positive definite, so it admits the orthonormal eigendecomposition~\eqref{eq:symmetric-eigendecomposition} with positive eigenvalues; $u_i = w_i/\|w_i\|$ are the columns of $U$. Define
\begin{equation}
  T:=U^\top\Gamma_X^{-1/2}.
  \label{eq:T-def}
\end{equation}
Then $T$ is invertible and simultaneously diagonalizes $\Gamma_X$ and $\Gamma_Y$ by congruence:
\begin{equation}
  T\Gamma_XT^\top=I_{d_M},
  \qquad
  T\Gamma_YT^\top=\Lambda.
  \label{eq:simultaneous-diagonalization}
\end{equation}

\noindent\textbf{Congruence lemma.}
For every invertible matrix $T$ and symmetric
matrices $A,B$ of
compatible dimensions,
\begin{equation}
  A\preceq B
  \quad\Longleftrightarrow\quad
  TAT^\top\preceq TBT^\top.
  \label{eq:congruence-order}
\end{equation}
Indeed,
\begin{equation}
  TBT^\top-TAT^\top=T(B-A)T^\top,
\end{equation}
and congruence by an invertible matrix preserves positive semidefiniteness.

\noindent\textbf{Inversion reverses the Loewner order.}
If $0\prec A\preceq B$, then $B^{-1}\preceq A^{-1}$. Indeed, by the
congruence lemma~\eqref{eq:congruence-order} with $T=B^{-1/2}$,
\begin{equation}
  A\preceq B
  \quad\Longleftrightarrow\quad
  C:=B^{-1/2}AB^{-1/2}\preceq I.
  \label{eq:inversion-step1}
\end{equation}
Since $C$ is symmetric positive definite, it admits an eigendecomposition
\begin{equation}
  C=V\operatorname{diag}(c_1,\ldots,c_d)V^\top,
  \qquad
  V^\top V=I,
\end{equation}
and $0\prec C\preceq I$ means $0<c_i\leq1$ for every $i$. Hence $c_i^{-1}\geq1$, and
\begin{equation}
  C^{-1}
  =
  V\operatorname{diag}(c_1^{-1},\ldots,c_d^{-1})V^\top
  \succeq
  I.
  \label{eq:inversion-step2}
\end{equation}
Finally, $C^{-1}=B^{1/2}A^{-1}B^{1/2}$, so applying the congruence lemma~\eqref{eq:congruence-order} with $T=B^{-1/2}$ to Equation~\eqref{eq:inversion-step2} gives
\begin{equation}
  A^{-1}
  =
  B^{-1/2}C^{-1}B^{-1/2}
  \succeq
  B^{-1/2} B^{-1/2}
  =
  B^{-1}.
\end{equation}

\noindent\textbf{Hadamard's determinant inequality.}
For every
positive-semidefinite matrix $C$,
\begin{equation}
  \det C\leq\prod_{i=1}^{d_M}C_{ii}.
  \label{eq:hadamard}
\end{equation}
If, moreover, $C\succ0$, equality holds if and only if $C$ is diagonal.

\noindent\textbf{Gaussian-channel representation.}
Following \citet{blackwell1953}, we call any random variable $A$ jointly
distributed with $M$, i.e., any Markov kernel $P_{A\mid M}$, an
\emph{experiment} about $M$.
For any pair of jointly Gaussian variables $(M,A)$,
after centering both variables, the variable $A$ can be represented
as a linear transformation of $M$ plus independent Gaussian noise:
\begin{equation}
  \begin{aligned}
    A   & =H_A M+N_A,             \\
    N_A & \sim\mathcal{N}(0,R_A),
    \qquad
    N_A\perp\!\!\!\perp M,
  \end{aligned}
  \label{eq:gaussian-experiment}
\end{equation}
where $H_A = \Sigma_{AM}\Sigma_M^{-1}$ and $R_A = \Sigma_{A|M} = \Sigma_{A} - H_A\Sigma_{MA}$. The posterior covariance $\Gamma_A =\Sigma_{M|A}$ satisfies the posterior-precision identity
\begin{equation}
  \Gamma_A^{-1}
  =
  \Sigma_M^{-1}+H_A^\top R_A^{-1}H_A,
  \label{eq:posterior-precision}
\end{equation}
which follows from applying the Woodbury matrix identity~\eqref{eq:woodbury} to the Schur complement formula~\eqref{eq:posterior-schur} for $\Gamma_A$ (\cref{sec:supp-identities}).

\section{CLOSED-FORM GAUSSIAN BROJA PID}
\label{sec:closed-form-broja}

We now use the preceding ingredients to evaluate the unrestricted Gaussian BROJA population-level optimization in closed form.

\subsection{Maximum-Determinant Optimizer}
\label{subsec:max-det-solution}

Define the feasible family of posterior-error covariance matrices by
\begin{equation}
  \mathcal{F}_{X,Y}
  :=
  \left\{
  \Gamma\in\mathbb{S}_{++}^{d_M}:
  \Gamma\preceq\Gamma_X,\
  \Gamma\preceq\Gamma_Y
  \right\},
  \label{eq:feasible-posterior-covariances}
\end{equation}
where $\mathbb{S}_{++}^{d_M}$ denotes the set of $d_M\times d_M$ symmetric positive-definite matrices. Let
\begin{equation}
  \Gamma^\star
  \in
  \operatorname*{arg\,max}_{\Gamma\in\mathcal{F}_{X,Y}}
  \det\Gamma.
  \label{eq:max-det-posterior-covariance}
\end{equation}
The existence and uniqueness of $\Gamma^\star$, together with its closed-form expression, are established next.

\begin{proposition}[Exact maximum-determinant optimizer]
  \label{prop:max-det}
  Let $\lambda_1,\ldots,\lambda_{d_M}$ and $T$ be defined by Equations~\eqref{eq:generalized-eigenvalues}--\eqref{eq:simultaneous-diagonalization}. Define
  \begin{equation}
    D:=\operatorname{diag}(d_1,\ldots,d_{d_M}),
    \qquad
    d_i:=\min\{1,\lambda_i\}.
    \label{eq:D-def}
  \end{equation}

  Recall that \begin{equation}
    T:=U^\top\Gamma_X^{-1/2}.
  \end{equation}
  Then the optimizer in Equation~\eqref{eq:max-det-posterior-covariance} exists uniquely and is given by
  \begin{equation}
    \Gamma^\star=T^{-1}DT^{-\top}.
    \label{eq:gamma-star}
  \end{equation}
  In particular,
  \begin{equation}
    \det\Gamma^\star
    =
    \det(T^{-1})^2
    \prod_{i=1}^{d_M}\min\{1,\lambda_i\}.
    \label{eq:gamma-star-det}
  \end{equation}
\end{proposition}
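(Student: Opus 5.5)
The plan is to change variables by the congruence $T$ so that both constraints become diagonal. For any $\Gamma\in\mathbb{S}_{++}^{d_M}$ set $C:=T\Gamma T^\top$. By the congruence lemma~\eqref{eq:congruence-order} and the simultaneous diagonalization~\eqref{eq:simultaneous-diagonalization}, the conditions $\Gamma\preceq\Gamma_X$ and $\Gamma\preceq\Gamma_Y$ are equivalent to $C\preceq I_{d_M}$ and $C\preceq\Lambda$. The map $\Gamma\mapsto C$ is a bijection of $\mathbb{S}_{++}^{d_M}$ onto itself, and it satisfies $\det\Gamma=\det(T^{-1})^2\det C$. Maximizing $\det\Gamma$ over $\mathcal{F}_{X,Y}$ is therefore equivalent to maximizing $\det C$ over
\[
\mathcal{C}:=\{C\succ0:\ C\preceq I,\ C\preceq\Lambda\}.
\]
It suffices to show that $D$ is the unique maximizer over $\mathcal{C}$. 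Both the formula $\Gamma^\star=T^{-1}DT^{-\top}$ and the determinant identity~\eqref{eq:gamma-star-det} then follow at once.

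\textbf{Feasibility of $D$.} Since $D$ is diagonal with entries $d_i=\min\{1,\lambda_i\}>0$, we have $D\succ0$. Moreover, $I-D$ and $\Lambda-D$ are diagonal with nonnegative entries, so $D\in\mathcal{C}$.

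\textbf{Upper bound.} Take any $C\in\mathcal{C}$. Testing $C\preceq I$ and $C\preceq\Lambda$ against the standard basis vector $e_i$ gives
\[
C_{ii}\le 1,\qquad C_{ii}\le\lambda_i,
\]
so $0<C_{ii}\le d_i$. Hadamard's inequality~\eqref{eq:hadamard} then yields
\[
\det C\le\prod_i C_{ii}\le\prod_i d_i=\det D.
\]
Hence $D$ attains the maximum, which proves existence.

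\textbf{Uniqueness.} This is the step needing the most care, though it is short. Suppose $\det C=\det D$ for some $C\in\mathcal{C}$. Both inequalities in the chain above must then be equalities. Because every $C_{ii}>0$ and $C_{ii}\le d_i$, equality in the product forces $C_{ii}=d_i$ for all $i$. Equality in Hadamard's inequality for $C\succ0$ forces $C$ to be diagonal, by the equality case stated with~\eqref{eq:hadamard}. Together these give $C=D$. Transporting back through the bijection shows that $\Gamma^\star=T^{-1}DT^{-\top}$ is the unique maximizer over $\mathcal{F}_{X,Y}$.

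Finally, $\det\Gamma^\star=\det(T^{-1})^2\det D$, which is~\eqref{eq:gamma-star-det}. Using $T^{-1}=\Gamma_X^{1/2}U$ also recovers the form $\Gamma_X^{1/2}UDU^\top\Gamma_X^{1/2}$ from the introduction, with $\det(T^{-1})^2=\det\Gamma_X$.
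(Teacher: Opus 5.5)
Your proposal is correct and follows essentially the same argument as the paper: you use congruence by $T$ to reduce the constraints to $C\preceq I_{d_M}$ and $C\preceq\Lambda$, bound the diagonal entries and apply Hadamard's inequality to get optimality, and use the equality cases of Hadamard's inequality and of the product bound to get uniqueness. The differences are cosmetic and correct: you phrase the reduction as a bijection on $\mathbb{S}_{++}^{d_M}$, and you also note that $\det(T^{-1})^2=\det\Gamma_X$.
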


\begin{proof}
  We first verify feasibility. Since every $\lambda_i$ is positive, $D\succ0$. Moreover,
  \begin{equation}
    D\preceq I_{d_M},
    \qquad
    D\preceq\Lambda.
  \end{equation}
  Using Equations~\eqref{eq:simultaneous-diagonalization}, \eqref{eq:congruence-order} and~\eqref{eq:gamma-star},
  \begin{equation}
    \Gamma_X-\Gamma^\star
    =
    T^{-1}(I_{d_M}-D)T^{-\top}
    \succeq0
  \end{equation}
  and
  \begin{equation}
    \Gamma_Y-\Gamma^\star
    =
    T^{-1}(\Lambda-D)T^{-\top}
    \succeq0.
  \end{equation}
  Hence $\Gamma^\star\in\mathcal{F}_{X,Y}$.

  Now let $\Gamma\in\mathcal{F}_{X,Y}$ be arbitrary and define
  \begin{equation}
    C:=T\Gamma T^\top.
    \label{eq:C-def}
  \end{equation}
  By the congruence lemma~\eqref{eq:congruence-order},
  \begin{equation}
    C\succ0,
    \qquad
    C\preceq I_{d_M},
    \qquad
    C\preceq\Lambda.
  \end{equation}
  Therefore, for every $i$,
  \begin{equation}
    C_{ii}\leq1,
    \qquad
    C_{ii}\leq\lambda_i,
  \end{equation}
  and consequently
  \begin{equation}
    C_{ii}\leq d_i=\min\{1,\lambda_i\}.
  \end{equation}
  Hadamard's determinant inequality~\eqref{eq:hadamard} gives
  \begin{equation}
    \det C
    \leq
    \prod_{i=1}^{d_M}C_{ii}
    \leq
    \prod_{i=1}^{d_M}d_i
    =
    \det D.
    \label{eq:C_D_inequality}
  \end{equation}
  Taking determinants in Equation~\eqref{eq:C-def}, and applying the multiplicativity of the determinant, gives
  \begin{align}
    \det\Gamma
     & =
    \det(T^{-1})^2\det C
    \\
     & \leq
    \det(T^{-1})^2\det D
    =
    \det\Gamma^\star.
  \end{align}
  Thus, $\Gamma^\star$ is optimal.

  Finally, to prove uniqueness, assume that $\det C = \det D$. Then both inequalities in Equation~\eqref{eq:C_D_inequality} must be equalities. Equality in Hadamard's inequality~\eqref{eq:hadamard} requires the positive-definite matrix $C$ to be diagonal. Since $0<C_{ii}\leq d_i$ for every $i$, equality $\prod_i C_{ii}=\prod_i d_i$ forces $C_{ii}=d_i$ for every $i$: if $C_{jj}<d_j$ for some $j$, then
  \begin{equation}
    \det C = \prod_{i=1}^{d_M}C_{ii}
    \leq
    C_{jj}\prod_{i\neq j}d_i
    <
    \prod_{i=1}^{d_M}d_i = \det D,
  \end{equation}
  a contradiction. Hence $C=D$, and therefore $\Gamma=\Gamma^\star$ and the optimizer is unique.
\end{proof}

In these generalized-eigenvector coordinates, the optimizer simply keeps the smaller of the two posterior variances in each target direction.

\subsection{Lower Bound on the Union Information}
\label{subsec:lower-bound}

\begin{proposition}[BROJA union information lower bound]
  \label{prop:broja-lower-bound}
  Under \cref{ass:gaussian-marginals},
  \begin{equation}
    I_{\cup}^{\mathrm{BROJA}}(M;X,Y)
    \geq
    \frac12\log_2
    \frac{\det\Sigma_M}{\det\Gamma^\star}.
    \label{eq:broja-lower-bound}
  \end{equation}
  The proof is given in~\cref{supp:proof-broja-lower-bound}.
\end{proposition}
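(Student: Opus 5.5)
We need to show that for every compatible coupling $Q\in\Delta(P)$, $I_Q(M;X,Y)\ge \frac12\log_2(\det\Sigma_M/\det\Gamma^\star)$. The idea is to reduce to a Gaussian-type bound using the error covariance of the MMSE estimator of $M$ from $(X,Y)$ under $Q$, and then invoke \cref{prop:max-det}.

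\textbf{Step 1: reduce to finite mutual information.} Fix $Q\in\Delta(P)$. If $I_Q(M;X,Y)=\infty$ there is nothing to prove, so assume it is finite.

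\textbf{Step 2: the error covariance is feasible.} Define the conditional error covariance
\begin{equation*}
  \Gamma_Q:=\mathbb{E}_Q\bigl[(M-\mathbb{E}_Q[M\mid X,Y])(M-\mathbb{E}_Q[M\mid X,Y])^\top\bigr].
\end{equation*}
Since $\mathbb{E}_Q[M\mid X,Y]$ is the MMSE estimator over all measurable functions of $(X,Y)$, it does at least as well as the estimator $\mathbb{E}[M\mid X]$, which depends only on $X$. Because $Q_{MX}=P_{MX}$ is Gaussian, $\mathbb{E}[M\mid X]$ is linear and its error covariance is $\Gamma_X$. The orthogonality principle then gives, in the Loewner order, $\Gamma_Q\preceq\Gamma_X$, and likewise $\Gamma_Q\preceq\Gamma_Y$.

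\textbf{Step 3: the error covariance is nonsingular.} We need $\Gamma_Q\succ0$, so that $\Gamma_Q\in\mathcal{F}_{X,Y}$. If $\Gamma_Q$ were singular, some nonzero linear functional $u^\top M$ would be a deterministic function of $(X,Y)$. Since $u^\top M$ has a continuous (Gaussian) law, this would force $I_Q(M;X,Y)=\infty$, contradicting Step 1. I expect this to be the main technical obstacle. I would handle it by bounding $I(u^\top M;X,Y)\le I(M;X,Y)$ and noting that $I(u^\top M; g(X,Y))=\infty$ when $u^\top M=g(X,Y)$ a.s. with $u^\top M$ continuous.

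\textbf{Step 4: maximum-entropy bound.} Write
\begin{equation*}
  I_Q(M;X,Y)=h(M)-h_Q(M\mid X,Y).
\end{equation*}
Here $h(M)=\frac12\log_2((2\pi e)^{d_M}\det\Sigma_M)$, since $Q_M=P_M$ is Gaussian. For the conditional term, $h_Q(M\mid X,Y)=h_Q(M-\hat M\mid X,Y)\le h_Q(M-\hat M)$ with $\hat M=\mathbb{E}_Q[M\mid X,Y]$. By the Gaussian maximum-entropy property, $h_Q(M-\hat M)\le\frac12\log_2((2\pi e)^{d_M}\det\Gamma_Q)$. If $h_Q(M\mid X,Y)=-\infty$, the bound is trivial. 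A measure-theoretic caveat: when $I$ is finite, $I=h(M)-h(M\mid X,Y)$ is valid, since $M$ has a density and the conditional density exists via the Radon--Nikodym derivative.

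\textbf{Step 5: conclude.} Combining Steps 2--4 gives
\begin{equation*}
  I_Q(M;X,Y)\ge \frac12\log_2\frac{\det\Sigma_M}{\det\Gamma_Q}\ge\frac12\log_2\frac{\det\Sigma_M}{\det\Gamma^\star},
\end{equation*}
where the second inequality uses $\Gamma_Q\in\mathcal{F}_{X,Y}$ and the optimality of $\Gamma^\star$ from \cref{prop:max-det}. Taking the infimum over $Q\in\Delta(P)$ yields \eqref{eq:broja-lower-bound}.
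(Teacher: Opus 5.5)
Your proposal is correct and follows essentially the same route as the paper: the MMSE error covariance under $Q$ lies in $\mathcal{F}_{X,Y}$, it is nonsingular whenever $I_Q(M;X,Y)<\infty$, the Gaussian maximum-entropy bound applies, and the max-determinant optimality of $\Gamma^\star$ finishes the argument. The differences are minor: you derive $\Gamma_Q\preceq\Gamma_X,\Gamma_Y$ directly from the orthogonality principle where the paper cites Rioul's estimation-theoretic data-processing lemma, and your nonsingularity argument via $I(u^\top M;X,Y)\le I_Q(M;X,Y)$ is a slightly cleaner variant of the paper's argument that a singular error covariance forces $h(M\mid X,Y)=-\infty$.
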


\subsection{Achievability and the Closed-Form Solution}
\label{subsec:achievability}

\begin{theorem}[Gaussian BROJA union and unique information]
  \label{thm:gaussian-broja}
  Under \cref{ass:gaussian-marginals}, let $\lambda_1,\ldots,\lambda_{d_M}$ be the generalized eigenvalues defined by
  \begin{equation}
    \Gamma_Yv_i=\lambda_i\Gamma_Xv_i,
  \end{equation}
  and let $\Gamma^\star$ be defined by Equation~\eqref{eq:gamma-star}. Then
  \begin{equation}
    I_{\cup}^{\mathrm{BROJA}}(M;X,Y)
    =
    \frac12\log_2
    \frac{\det\Sigma_M}{\det\Gamma^\star}.
    \label{eq:main-union}
  \end{equation}
  Consequently,
  \begin{equation}
    \operatorname{UI}^{\mathrm{BROJA}}_X
    =
    \frac12
    \sum_{i=1}^{d_M}[\log_2\lambda_i]_+,
    \label{eq:main-ui-x}
  \end{equation}
  and
  \begin{equation}
    \operatorname{UI}^{\mathrm{BROJA}}_Y
    =
    \frac12
    \sum_{i=1}^{d_M}[-\log_2\lambda_i]_+.
    \label{eq:main-ui-y}
  \end{equation}
  Moreover, the infimum in the BROJA optimization is attained by a compatible jointly Gaussian coupling.
\end{theorem}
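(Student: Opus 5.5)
The plan is to combine the lower bound of \cref{prop:broja-lower-bound} with an explicit achieving coupling. It then suffices to construct a jointly Gaussian $Q\in\Delta_G(P)$ with $\operatorname{Cov}_Q(M\mid X,Y)=\Gamma^\star$. Since $Q$ is Gaussian, this gives
\begin{equation*}
  I_Q(M;X,Y)=\tfrac12\log_2\frac{\det\Sigma_M}{\det\Gamma^\star}.
\end{equation*}
Together with the lower bound, this yields equality in \eqref{eq:main-union} and attainment by a Gaussian coupling.

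\textbf{Construction via degradation.} Work in the Gaussian-channel representation \eqref{eq:gaussian-experiment}. The key observation is that $\Gamma^\star\preceq\Gamma_X$ and $\Gamma^\star\preceq\Gamma_Y$, so in the coordinates $\tilde M:=TM$ each source can be realised from a single common Gaussian statistic of $M$.

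Set $\tilde M=TM$, with prior covariance $S:=T\Sigma_MT^\top$. The posterior covariances are then $I$ (given $X$), $\Lambda$ (given $Y$) and $D$ (target). I would introduce an auxiliary experiment $Z=M+N_Z$ with noise covariance chosen so that
\begin{equation*}
  \Gamma_Z^{-1}=(\Gamma^\star)^{-1}, \qquad\text{i.e.}\qquad R_Z^{-1}=(\Gamma^\star)^{-1}-\Sigma_M^{-1}.
\end{equation*}
This is PSD because $\Gamma^\star\preceq\Gamma_X\preceq\Sigma_M$ and inversion reverses the Loewner order. If $R_Z^{-1}$ is singular, I would restrict $Z$ to its range.

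Next I need $X$ to be a garbling of $Z$: $X = G_X Z + \text{noise}$, independent of $M$ given $Z$, with the correct joint law $P_{MX}$. For Gaussians this holds iff $\Gamma_Z\preceq\Gamma_X$, i.e. Blackwell ordering. The explicit argument is to take the Gaussian $(M,Z,X)$ in which $X$ is conditionally independent of $Z$ given $M$... but that is the wrong direction. Instead I would define the joint law by requiring $M\to Z\to X$. Set $X=\mathbb E[X\mid M]$-compatible: choose $G_X,R'$ with $G_XH_Z=H_X$ (here $H_Z=I$, so $G_X=H_X$) and $R'=R_X-H_XR_ZH_X^\top\succeq0$.

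Positivity of $R'$ is the main obstacle. It does not follow immediately from $\Gamma^\star\preceq\Gamma_X$ unless the whole $Z$-construction is adapted. I would therefore choose $Z$ more cleverly, as $Z=(X',Y')$-type sufficient statistics in the $T$-coordinates.

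\textbf{Coordinatewise construction.} Since everything is simultaneously diagonal in the $T$-basis for posteriors but not for priors, the cleanest route is to use the sufficient statistics $\hat X=\mathbb E[M\mid X]$ and $\hat Y=\mathbb E[M\mid Y]$. Writing $\hat X = M - E_X$ reverses the channel. Instead I would build the coupling as follows. Let $W_i$ be independent Gaussian noises, and in $T$-coordinates define the experiment $A$ with precision increment $D^{-1}-S^{-1}$. Then decompose $I^{-1}-S^{-1}=(D^{-1}-S^{-1})-(D^{-1}-I)$. Degrading an experiment of precision $P_1$ to precision $P_2\preceq P_1$ is possible for linear Gaussian channels by adding noise when the channel matrix can be factored, and I expect to verify this via a short lemma. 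The lemma is: if $0\preceq H_2^\top R_2^{-1}H_2\preceq H_1^\top R_1^{-1}H_1$, then the channel $(H_2,R_2)$ is a garbling of $(H_1,R_1)$, after reducing both to sufficient statistics $H^\top R^{-1}A$.

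Applying this lemma twice, first to $X$ and then to $Y$, with conditionally independent garbling noises, gives a Gaussian $Q$. Under $Q$, $(X,Y)$ is a function of $Z$ plus noise independent of $M$, so $\operatorname{Cov}_Q(M\mid X,Y)\succeq\Gamma_Z=\Gamma^\star$. Combined with the lower bound, this forces equality. Since the sufficient statistics of $X$ recover $P_{MX}$, the coupling extends to the original variables with the correct marginals.

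\textbf{Unique informations.} For the unique information, use \eqref{eq:ui-union-relation} together with $I(M;Y)=\frac12\log_2(\det\Sigma_M/\det\Gamma_Y)$. This gives
\begin{equation*}
  \operatorname{UI}_X^{\BROJA}=\tfrac12\log_2\frac{\det\Gamma_Y}{\det\Gamma^\star}=\tfrac12\log_2\frac{\det\Lambda}{\det D}=\tfrac12\sum_i\log_2\frac{\lambda_i}{\min\{1,\lambda_i\}},
\end{equation*}
which equals $\frac12\sum_i[\log_2\lambda_i]_+$. Symmetrically, $\det\Gamma_X/\det\Gamma^\star=1/\det D$ gives $\frac12\sum_i[-\log_2\lambda_i]_+$.
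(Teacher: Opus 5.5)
Your architecture is the paper's. You use the lower bound of \cref{prop:broja-lower-bound}, an auxiliary Gaussian experiment $Z$ with $\Cov(M\mid Z)=\Gamma^\star$, conditionally independent Gaussian garblings of $Z$ into copies of $X$ and $Y$, and a sandwich argument; using \cref{lem:posterior-mse} in place of data processing works equally well. The unique-information computation also matches, and your direct evaluation $\det\Gamma_X/\det\Gamma^\star=1/\det D$ for $\UI_Y^{\BROJA}$ is fine.

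The gap is the garbling step, which is the only nontrivial part of attainment. You write down the right kernel, $G_X=H_X$ and $R'=R_X-H_XR_ZH_X^\top$. You then assert that $R'\succeq0$ does not follow from $\Gamma^\star\preceq\Gamma_X$, abandon it, and fall back on a lemma that you only ``expect to verify'': if $H_2^\top R_2^{-1}H_2\preceq H_1^\top R_1^{-1}H_1$ then $(H_2,R_2)$ is a garbling of $(H_1,R_1)$. That lemma is exactly the content of \cref{lem:gaussian-garbling}. So, as written, neither equality in \eqref{eq:main-union} nor attainment by a Gaussian coupling is proved. The detour through $\mathbb{E}[M\mid X]$ and the $T$-coordinates does not supply it: the identity $I-S^{-1}=(D^{-1}-S^{-1})-(D^{-1}-I)$ only re-derives $J_X\preceq J^\star$, which is the lemma's hypothesis, not its conclusion.

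The obstacle you saw is not real, and your first construction already closes the gap when $J^\star$ is invertible. By \eqref{eq:posterior-precision} and order reversal under inversion, $\Gamma^\star\preceq\Gamma_X$ gives $J_X:=H_X^\top R_X^{-1}H_X\preceq(\Gamma^\star)^{-1}-\Sigma_M^{-1}=J^\star=R_Z^{-1}$. With $B:=R_X^{-1/2}H_XR_Z^{1/2}$ this reads $B^\top B\preceq I$. That is equivalent to $BB^\top\preceq I$ (both say $\|B\|\le1$), i.e.\ $H_XR_ZH_X^\top\preceq R_X$, hence $R'\succeq0$. This is exactly the kernel \eqref{eq:garbling-G-S} of \cref{lem:gaussian-garbling} with $H_Z=I$; the paper gets positivity from a Schur-complement argument instead.

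The singular case needs more than ``restrict $Z$ to its range'', and it does occur, e.g.\ when some direction of $M$ is seen by neither source. The missing observation is that $0\preceq J_X\preceq J^\star$ forces $H_Xu=0$ for every $u\in\ker J^\star$. Use the paper's parametrization $Z=(J^\star)^{1/2}(M-\mu_M)+\varepsilon$, and take $G_X=H_X(J^\star)^{\dagger}(J^\star)^{1/2}$ and $R'=R_X-H_X(J^\star)^{\dagger}H_X^\top$. Then:
\begin{itemize}
\item $G_X(J^\star)^{1/2}=H_X$, because $H_X$ vanishes on $\ker J^\star$;
\item $G_X\varepsilon$ has covariance $H_X(J^\star)^{\dagger}H_X^\top$, so the total noise covariance is $R_X$;
\item $R'\succeq0$ by the same norm argument with $B=R_X^{-1/2}H_X((J^\star)^{\dagger})^{1/2}$, since $B^\top B\preceq((J^\star)^{\dagger})^{1/2}J^\star((J^\star)^{\dagger})^{1/2}\preceq I$.
\end{itemize}
The paper handles this case differently: it splits off the $\Sigma_M^{-1}$-orthogonal component of $M$ in $\ker J_Z$ and reduces to the invertible case. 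With these kernels, your conditionally independent garbling and sandwich argument go through as you describe.
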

\newpage
\begin{proof}
  We first construct a Gaussian experiment $Z^\star$ attaining the lower bound of \cref{prop:broja-lower-bound}. By \cref{prop:max-det},
  \begin{equation}
    0\prec\Gamma^\star\preceq\Gamma_X\preceq\Sigma_M.
  \end{equation}
  Since inversion reverses the Loewner order, the matrix
  \begin{equation}
    J^\star
    :=
      (\Gamma^\star)^{-1}-\Sigma_M^{-1}
    \label{eq:J-star}
  \end{equation}
  is positive semidefinite. Define the linear Gaussian experiment
  \begin{equation}
    \begin{aligned}
      Z^\star
       & :=
           (J^\star)^{1/2}(M-\mu_M)+\varepsilon, \\
      \varepsilon
       & \sim\mathcal{N}(0,I_{d_M}),
      \qquad
      \varepsilon\perp\!\!\!\perp M.
    \end{aligned}
    \label{eq:Z-star}
  \end{equation}
  Applying the posterior-precision identity \eqref{eq:posterior-precision} gives
  \begin{equation}
    \operatorname{Cov}(M\mid Z^\star)^{-1}
    =
    \Sigma_M^{-1}+J^\star
      =
      (\Gamma^\star)^{-1}.
  \end{equation}
  Therefore,
  \begin{equation}
    \operatorname{Cov}(M\mid Z^\star)
    =
    \Gamma^\star.
    \label{eq:Z-star-posterior}
  \end{equation}
  \cref{prop:max-det} also gives
  \begin{equation}
    \Gamma^\star\preceq\Gamma_X,
    \qquad
    \Gamma^\star\preceq\Gamma_Y.
  \end{equation}

  Since $(M,Z^\star)$ is jointly Gaussian,
\begin{align}
I(M;Z^\star)
& =
\frac12\log_2
\frac{\det\Sigma_M}
{\det\operatorname{Cov}(M| Z^\star)},
\\
& =
\frac12\log_2
\frac{\det\Sigma_M}{\det\Gamma^\star}.
\end{align}

  \vspace{0.6em}
  We next show that the optimum is attained by a compatible jointly
  Gaussian coupling $Q \in \Delta(P)$.
  Since $\operatorname{Cov}(M\mid Z^\star)=\Gamma^\star$
  by Equation~\eqref{eq:Z-star-posterior}, with
  $\Gamma^\star\preceq\Gamma_X$ and $\Gamma^\star\preceq\Gamma_Y$,
  \cref{lem:gaussian-garbling} (\cref{app:gaussian-garbling}), applied
  with $Z=Z^\star$ and $A=X$ and then with $A=Y$, provides
  linear-Gaussian garbling
  kernels $K_X(dx\mid z)$ and $K_Y(dy\mid z)$
  that generate copies $X^\star$ and $Y^\star$ of the source
  experiments from $Z^\star$. Generate $X^\star$ and $Y^\star$
  independently conditional on $Z^\star$. Then
  \begin{equation}
    (M,X^\star)\overset{d}{=}(M,X),
    \qquad
    (M,Y^\star)\overset{d}{=}(M,Y),
  \end{equation}
  so the joint law of $(M,X^\star,Y^\star)$ belongs to $\Delta(P)$. Moreover,
  \begin{equation}
    M-Z^\star-(X^\star,Y^\star),
  \end{equation}
  and therefore, from the data-processing inequality~\eqref{eq:data-processing},
  \begin{equation}
    I(M;X^\star,Y^\star)\leq I(M;Z^\star).
    \label{eq:XY_Z-star-data-processing}
  \end{equation}
  Combining now Equation \eqref{eq:broja-lower-bound} from \cref{prop:broja-lower-bound}, the definition of BROJA union information \eqref{eq:broja-union} and (\ref{eq:XY_Z-star-data-processing}) gives
  \begin{equation}
    \begin{aligned}
      \frac12\log_2
      \frac{\det\Sigma_M}{\det\Gamma^\star}
       & \leq
      I_{\cup}^{\mathrm{BROJA}}(M;X,Y)
      \\
       & \leq
      I(M;X^\star,Y^\star)
      \\
       & \leq
      I(M;Z^\star)
      =
      \frac12\log_2
      \frac{\det\Sigma_M}{\det\Gamma^\star}.
    \end{aligned}
  \end{equation}
  Every inequality is therefore an equality, proving that $(M,X^\star,Y^\star)$ is an attaining jointly Gaussian coupling, and proving Equation~\eqref{eq:main-union}.

  It remains to evaluate the two unique-information components. By Equation~\eqref{eq:ui-union-relation} and Equation~\eqref{eq:main-union},
  \begin{align}
    \operatorname{UI}^{\mathrm{BROJA}}_X
     & =
    \frac12\log_2
    \frac{\det\Sigma_M}{\det\Gamma^\star}
    -\frac12\log_2
    \frac{\det\Sigma_M}{\det\Gamma_Y}
    \\
     & =\frac12\log_2
    \frac{\det\Gamma_Y}{\det\Gamma^\star}.
    \label{eq:ui-x-det-ratio}
  \end{align}
  Using Equations~\eqref{eq:simultaneous-diagonalization} and~\eqref{eq:gamma-star},
  \begin{equation}
    \Gamma_Y=T^{-1}\Lambda T^{-\top},
    \qquad
    \Gamma^\star=T^{-1}DT^{-\top}.
  \end{equation}
  Consequently,
 \begin{equation}
\frac{\det\Gamma_Y}{\det\Gamma^\star}
=
\frac{\det\Lambda}{\det D}
=
\prod_{i=1}^{d_M}
\frac{\lambda_i}{\min\{1,\lambda_i\}}.
\end{equation}
For each $i$,
\begin{equation}
\log_2
\frac{\lambda_i}{\min\{1,\lambda_i\}}
=
[\log_2\lambda_i]_+.
\end{equation}
Substitution into Equation~\eqref{eq:ui-x-det-ratio} proves
Equation~\eqref{eq:main-ui-x}.

Finally, interchanging $X$ and $Y$ replaces the multiset of
generalized eigenvalues by
$\{\lambda_i^{-1}\}_{i=1}^{d_M}$. Hence
\begin{equation}
\operatorname{UI}^{\mathrm{BROJA}}_Y
=
\frac12
\sum_{i=1}^{d_M}[\log_2\lambda_i^{-1}]_+
=
\frac12
\sum_{i=1}^{d_M}[-\log_2\lambda_i]_+,
\end{equation}
which proves Equation~\eqref{eq:main-ui-y}.
\end{proof}

\begin{corollary}
  \label{supp:corollary 1}
  Using the result established in the proof of \cref{thm:gaussian-broja}, we can derive closed-form expressions for the BROJA redundancy and synergy. For a detailed derivation, see \cref{supp:proof-full-pid}.
\end{corollary}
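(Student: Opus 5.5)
The plan is to derive redundancy and synergy directly from the consistency equations \eqref{eq:pid-decomposition}--\eqref{eq:pid-marginals}, using the closed forms of \cref{thm:gaussian-broja}. For redundancy, \eqref{eq:pid-marginals} gives \(\RI^{\BROJA}=I(M;X)-\UI^{\BROJA}_X\). In the Gaussian setting,
\begin{equation*}
I(M;X)=\tfrac12\log_2\frac{\det\Sigma_M}{\det\Gamma_X}.
\end{equation*}
Substituting \eqref{eq:main-ui-x} (equivalently, subtracting \eqref{eq:main-union} from \(I(M;X)+I(M;Y)\)) then yields
\begin{equation*}
\RI^{\BROJA}=\tfrac12\log_2\frac{\det\Sigma_M\,\det\Gamma^\star}{\det\Gamma_X\,\det\Gamma_Y}+\tfrac12\log_2\frac{\det\Gamma_X\det\Gamma_Y}{(\det\Gamma^\star)^2}\cdot 0 ,
\end{equation*}
or, more cleanly, \(\RI^{\BROJA}=I(M;X)+I(M;Y)-I_\cup^{\BROJA}(M;X,Y)\).

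Next I would simplify this in generalized-eigenvector coordinates, using \eqref{eq:simultaneous-diagonalization} and \eqref{eq:gamma-star}. With \(\det\Gamma_X=\det(T^{-1})^2\), \(\det\Gamma_Y=\det(T^{-1})^2\prod_i\lambda_i\) and \(\det\Gamma^\star=\det(T^{-1})^2\prod_i\min\{1,\lambda_i\}\), we get
\begin{equation*}
\RI^{\BROJA}=\tfrac12\log_2\frac{\det\Sigma_M}{\det\Gamma_X}-\tfrac12\sum_i[\log_2\lambda_i]_+
=\tfrac12\log_2\frac{\det\Sigma_M}{\det\Gamma_X\prod_i\max\{1,\lambda_i\}}.
\end{equation*}
This expression is symmetric under interchanging \(X\) and \(Y\), because \(\det\Gamma_Y=\det\Gamma_X\prod_i\lambda_i\) and \(\max\{1,\lambda\}/\lambda=\max\{1,\lambda^{-1}\}\).

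For synergy, \eqref{eq:pid-decomposition} gives \(\SI^{\BROJA}=I(M;X,Y)-I_\cup^{\BROJA}(M;X,Y)\). Here I would note that under \cref{ass:gaussian-marginals} alone, the term \(I(M;X,Y)\) depends on the true joint \(P_{MXY}\). I would therefore state the corollary under the additional assumption that the true joint is jointly Gaussian with conditional covariance \(\Gamma_{XY}=\operatorname{Cov}(M\mid X,Y)\). This gives
\begin{equation*}
\SI^{\BROJA}=\tfrac12\log_2\frac{\det\Gamma^\star}{\det\Gamma_{XY}}.
\end{equation*}

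The main point needing care is nonnegativity of \(\RI\) and \(\SI\), which is not automatic from the algebra. For \(\RI\), I would use \(\Gamma^\star\preceq\Gamma_X\preceq\Sigma_M\) together with the eigen-form: in the \(T\)-coordinates, \(T\Sigma_M T^\top\succeq \max\) of \(I\) and \(\Lambda\) entrywise on the diagonal. The cleaner route is that \(\RI=I(M;X)-\UI_X\ge0\) because \(\UI_X=\inf_Q I_Q(M;X\mid Y)\le I(M;X)\), taking \(Q\) with \(X\perp Y\mid M\). For \(\SI\), nonnegativity follows because the true \(P\) lies in \(\Delta(P)\), so \(I_\cup^{\BROJA}\le I(M;X,Y)\); equivalently, this is the Loewner bound \(\Gamma_{XY}\) versus \(\Gamma^\star\) combined with the determinant monotonicity already used in \cref{prop:max-det}.
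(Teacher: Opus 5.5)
Your proposal is correct and matches the paper's derivation. The paper likewise computes $\RI^{\BROJA}=I(M;X)+I(M;Y)-I_\cup^{\BROJA}(M;X,Y)$ from the Gaussian mutual-information formulas, and $\SI^{\BROJA}=I(M;X,Y)-I_\cup^{\BROJA}(M;X,Y)$ under the same extra assumption that the observed triple is (nondegenerate) jointly Gaussian. Your eigenvalue form of $\RI^{\BROJA}$ and the nonnegativity remarks are correct extras not in the paper. Delete the stray ``$\cdot 0$'' term in your first display. State the alternative argument for $\SI^{\BROJA}\geq0$ as $\Gamma_{XY}\in\mathcal{F}_{X,Y}$ plus maximality of $\det\Gamma^\star$ from \cref{prop:max-det}, since $\Gamma_{XY}\preceq\Gamma^\star$ can fail.
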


\begin{remark}[Blackwell interpretation of the optimizer]
  \label{rem:blackwell}
  The proof generates the attaining coupling from the single Gaussian experiment $Z^\star$ through the Markov chains $M-Z^\star-X^\star$ and $M-Z^\star-Y^\star$, with $(M,X^\star)\overset{d}{=}(M,X)$ and $(M,Y^\star)\overset{d}{=}(M,Y)$. In the language of the comparison of experiments \citep{blackwell1953}, $Z^\star$ is thus at least as informative about $M$ as each source, i.e., a common refinement of $X$ and $Y$ in the sense of \citet{kolchinsky2022novel}. Among Gaussian experiments with this property, $Z^\star$ is the least informative about $M$: any such $Z$ yields, by the same construction, a compatible coupling $Q_Z\in\Delta(P)$ with $I(M;Z)\geq I_{Q_Z}(M;X,Y)\geq\frac12\log_2(\det\Sigma_M/\det\Gamma^\star)=I(M;Z^\star)$, by the data-processing inequality~\eqref{eq:data-processing} and \cref{prop:broja-lower-bound}.
\end{remark}

\section{RELATED WORK}
\label{sec:related-work}

\paragraph{Secret-key agreement.}
\citet{james2019} considered several secret-key agreement scenarios for quantifying unique information. In their \emph{elephant} scenario---named after the ancient Indian parable of blind men describing different parts of an elephant---the sources are viewed as different partial observations of a common target. To quantify the information about $M$ uniquely captured by $X$, the target may send a public message, available to both sources, while the observations of $M$ and $X$ are used to establish a shared secret that remains secret from $Y$. The corresponding one-way secret-key rate is $S(X\leftarrow M\Vert Y)$. James et al.\ drew a conceptual connection between BROJA unique information and this elephant secret-key scenario.

For discrete variables, \citet{banerjee2018} established a formal connection by showing that the corresponding one-way secret-key rate is upper-bounded by the BROJA unique information. In a work unrelated to PID, \citet{watanabe2011} derived the one-way secret-key rate for vector Gaussian sources with unrestricted public communication in closed form. In our notation,
\begin{equation}
  S(X\leftarrow M\Vert Y)
  =
  \frac12\sum_{i=1}^{d_M}[\log_2\lambda_i]_+ .
\end{equation}
Our result establishes that, in the Gaussian setting considered here, this one-way secret-key rate is the exact minimum of the BROJA unique-information optimization:
\begin{equation}
  S(X\leftarrow M\Vert Y)
  =
  \operatorname{UI}^{\mathrm{BROJA}}_X .
  \label{eq:broja-elephant-equivalence}
\end{equation}

\paragraph{Other analytical BROJA solutions.} \citet{goswami2024} derived analytical BROJA PID expressions for several structured families. For Gaussian distributions, their multivariate result allows vector-valued sources but is limited to a scalar target and assumes conditions yielding a compatible Markov coupling between the source experiments. Our result addresses the general multivariate Gaussian case, allowing the target and both sources to be multivariate, and does not require such a Markov coupling.

\newpage
\begin{figure}[!b]
  \centering \includegraphics{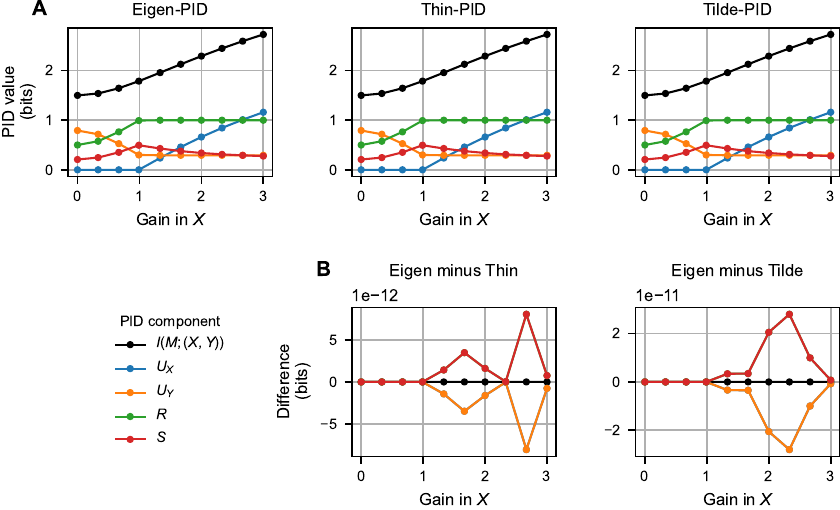}
  \caption{
    \textbf{Close numerical agreement between the closed-form solution and optimization-based methods.} Gain-sweep benchmark from Example~8 of \citet{venkatesh2023}, using the experimental setting of \citet{zhao2025}. The gain \(\alpha\) of \(X_1\) along target direction \(M_1\) is varied from \(0\) to \(3\), while the remaining channel parameters are fixed. For \(\alpha<1\), \(Y\) is more informative along both target directions; at \(\alpha=1\), \(X_1\) and \(Y_1\) are equally informative; and for \(\alpha>1\), \(X\) becomes more informative along \(M_1\), while \(Y\) remains more informative along \(M_2\). \textbf{(A)} PID components computed by Eigen-PID, Thin-PID, and Tilde-PID. \textbf{(B)} differences between Eigen-PID and the numerical methods. The differences remain below \(2.8\cdot10^{-11}\) bits for Tilde-PID and below \(8\cdot10^{-12}\) bits for Thin-PID. }
  \label{fig:gain_sweep}
\end{figure}

\section{IMPLEMENTATION AND NUMERICAL VALIDATION}
\label{sec:numerical}

We implemented the closed-form Eigen-PID solution in Python. The code, numerical examples, and usage instructions are available at \url{https://github.com/brainsandmachines/Eigen_PID}.

\paragraph{Numerical agreement.}

\Cref{fig:gain_sweep} reproduces the gain-sweep experiment introduced by \citet{venkatesh2023} and compares the resulting PID components across Eigen-PID, Thin-PID, and Tilde-PID (further details in \cref{subsec:supp-benchmark-examples}). For each PID component, the differences between the three methods are below \(2.8\cdot10^{-11}\) bits. To further verify optimality numerically, we initialized Tilde-PID and Thin-PID at an optimal Gaussian coupling derived from our solution. Absolute differences in union information were at most \(5.68\cdot10^{-11}\) bits across all trials, with mean absolute differences below \(4.1\cdot10^{-14}\) bits for both methods (\cref{fig:optimal-initialization}).

\paragraph{Runtime comparisons.}

We compared Eigen-PID, Thin-PID, and Tilde-PID on Gaussian models with one target and two sources (experimental setup in \cref{subsec:supp-runtime-comparisons}). We increased the dimension of one source while fixing the target and other source dimensions at 1000. All methods were evaluated on the same population covariance matrices using identical CPU resources. \Cref{fig:runtime_speedup} shows speedups of up to three orders of magnitude; \cref{fig:runtime_grid} extends the comparison to other target and source dimensions.

\clearpage
\section{DISCUSSION}
We have provided a closed-form solution for Gaussian BROJA PID. The solution shows that the unique information carried by each source is determined directly by the generalized eigenvalues of the two posterior covariance matrices and gives a closed-form expression for $\Gamma^\star$, the conditional covariance of $M$ given both sources under an optimal Gaussian coupling. We demonstrate that Eigen-PID agrees with Thin-PID and Tilde-PID up to numerical error while reducing runtime by orders of magnitude in high-dimensional settings.

\paragraph{Positive-definiteness requirements.} Applying Eigen-PID to data requires the estimated source--target joint covariance matrices to satisfy the positive-definiteness requirement in \cref{ass:gaussian-marginals}. For example, this requirement does not hold for sample covariance estimates when $N \leq d_M+\max(d_X,d_Y)$, where $N$ is the sample size. We recommend using regularization or dimensionality reduction to obtain positive-definite, well-conditioned covariance estimates. Once this positive-definiteness requirement is satisfied, the Schur complements defining $\Gamma_X$ and $\Gamma_Y$ are well defined and positive definite.

\paragraph{Correcting finite-sample bias.} As with existing Gaussian PID methods, replacing population covariances with sample covariances yields a plug-in estimator with finite-sample bias, even if the covariance estimates are well-conditioned. The resampling-based corrections of \citet{lorenz2026} can address this bias. Using Eigen-PID in place of their optimization-based estimator makes these corrections practical even for high-dimensional data. We provide an implementation of this approach and demonstrate its utility using real visual neuroimaging data (\cref{sec:fmri}). Our main fMRI PID analysis took 14.5 minutes using Eigen-PID, compared with a projected runtime of 21 days using Thin-PID (\cref{tab:fmri_runtime}). Moving the Eigen-PID computation from the CPU to a GPU further reduced the runtime to 3.9 minutes.

\paragraph{Extension to non-Gaussian data.} For non-Gaussian data, \citet{zhao2025} use invertible normalizing flows to map the variables into a latent space with approximately Gaussian source-target marginals. Gaussian PID is then computed in this latent space using Thin-PID. Eigen-PID can replace Thin-PID at this stage, removing the iterative optimization required for the Gaussian PID computation.

\paragraph{Differentiating through the PID computation.} Our PyTorch implementation of the closed-form PID is differentiable almost everywhere (with ReLU-like kinks at $\lambda_i=1$), enabling the exact Gaussian BROJA PID components to be used as loss terms in gradient-based optimization.

\clearpage
\subsection*{AI use statement}

In this work, we used generative AI tools to formulate mathematical claims, assist in the writing of proofs, and implement methods. Specifically, ChatGPT 5.6 Pro was used during exploratory mathematical work and generated preliminary versions of some proofs. The final versions of those proofs use substantially different arguments and do not rely on critical proof ingredients from the AI-generated versions. Codex assisted with software implementation from human-provided algorithmic specifications.

We did not use generative AI tools to develop theoretical models or conceptual frameworks, provide critical ingredients for the proofs ultimately presented, propose or refine hypotheses, design or provide feedback on research methodology or experiments, or interpret results. Generating synthetic datasets, assisting with translation, cleaning or reformatting datasets, and supporting qualitative or thematic data analysis are not applicable to this work.

Additionally, we used ChatGPT to generate copyediting and proofreading suggestions. AI tools also assisted with confirming consistency among descriptions of numerical results, figures, and experimental logs.

We reviewed all AI-assisted material. The mathematical claims and proofs in the paper and supplementary material were either developed by the authors or, where AI-assisted, manually verified and revised by the authors. AI-assisted code was manually verified and edited. We take responsibility for the final content of this work, including text, claims, proofs, code, and other artifacts produced with the aid of generative AI.

\subsubsection*{Acknowledgements}
This research was supported by the Israel Science Foundation (grants 534/24 to T.G. and 1138/23 to A.P.) and by the Israel Ministry of Innovation, Science and Technology (Israel–France Collaboration Program, 2025–2028, to A.P.).

\bibliographystyle{plainnat-doifirst}
\bibliography{references}

\clearpage
\appendix
\setcounter{figure}{0}\renewcommand{\thefigure}{S\arabic{figure}}
\setcounter{table}{0}
\renewcommand{\thetable}{S\arabic{table}}
\thispagestyle{empty}
\onecolumn

\begin{center}
  \makebox[\textwidth][c]{\Large\bfseries Gaussian BROJA Partial Information Decomposition in Closed Form:}\\[2pt]
  {\Large\bfseries Supplementary Material}
\end{center}

\addtocontents{toc}{\protect\setcounter{tocdepth}{2}}
\renewcommand{\contentsname}{Contents}
\tableofcontents

\section{DEFINITIONS AND USEFUL IDENTITIES}
\label{supp:definition}
\paragraph{Covariance}
Let $A \in \mathbb{R}^{d_A}$ and $B \in \mathbb{R}^{d_B}$ be random vectors with finite second moments. We denote their covariance matrices by
\[
  \Sigma_A
  :=
  \Cov(A)
  =
  \mathbb{E}\!\left[
    (A-\mathbb{E}[A])(A-\mathbb{E}[A])^\top
    \right]
  \in \mathbb{R}^{d_A \times d_A},
\]
and
\[
  \Sigma_B
  :=
  \Cov(B)
  =
  \mathbb{E}\!\left[
    (B-\mathbb{E}[B])(B-\mathbb{E}[B])^\top
    \right]
  \in \mathbb{R}^{d_B \times d_B}.
\]

Their cross-covariance matrices are
\[
  \Sigma_{AB}
  :=
  \Cov(A,B)
  =
  \mathbb{E}\!\left[
    (A-\mathbb{E}[A])(B-\mathbb{E}[B])^\top
    \right]
  \in \mathbb{R}^{d_A \times d_B},
\]
and
\[
  \Sigma_{BA}
  :=
  \Cov(B,A)
  =
  \mathbb{E}\!\left[
    (B-\mathbb{E}[B])(A-\mathbb{E}[A])^\top
    \right]
  \in \mathbb{R}^{d_B \times d_A},
\]
with
\[
  \Sigma_{BA}=\Sigma_{AB}^{\top}.
\]

We denote the joint covariance matrix of the stacked vector $(A,B)$ by
\[
  \Sigma_{(A,B)}
  :=
  \Cov
  \begin{pmatrix}
    A \\
    B
  \end{pmatrix}
  =
  \begin{pmatrix}
    \Sigma_A    & \Sigma_{AB} \\
    \Sigma_{BA} & \Sigma_B
  \end{pmatrix}
  \in
  \mathbb{R}^{(d_A+d_B)\times(d_A+d_B)}.
\]

\label[appendix]{sec:supp-identities}

We collect four standard facts used in the main text and in the proofs.

\paragraph{Differential entropy of a Gaussian vector.}
If $W\sim\mathcal{N}(\mu,\Sigma)$ in $\mathbb{R}^{d}$ with $\Sigma\succ0$, then
\begin{equation}
  h(W)
  =
  \frac12\log_2\bigl((2\pi e)^{d}\det\Sigma\bigr).
  \label{eq:gaussian-entropy}
\end{equation}
Consequently, if $(M,A)$ is jointly Gaussian with $M\sim\mathcal{N}(\mu_M,\Sigma_M)$ and constant conditional covariance $\Gamma_A=\operatorname{Cov}(M\mid A)\succ0$, then
\begin{equation}
  I(M;A)
  =
  h(M)-h(M\mid A)
  =
  \frac12\log_2\frac{\det\Sigma_M}{\det\Gamma_A}.
  \label{eq:gaussian-mi}
\end{equation}

\paragraph{Woodbury matrix identity.}
For matrices of compatible dimensions with $A$,$C$ and (A+UCV) invertible,
\begin{equation}
  (A+UCV)^{-1}
  =
  A^{-1}-A^{-1}U\bigl(C^{-1}+VA^{-1}U\bigr)^{-1}VA^{-1}.
  \label{eq:woodbury}
\end{equation}
Applied to the Schur complement~\eqref{eq:posterior-schur} with $A=\Sigma_M$, $U=-\Sigma_{MA}$, $C=\Sigma_A^{-1}$, and $V=\Sigma_{AM}$, it gives
\begin{equation}
  \Gamma_A^{-1}
  =
  \Sigma_M^{-1}
  +
  \Sigma_M^{-1}\Sigma_{MA}
  \bigl(\Sigma_A-\Sigma_{AM}\Sigma_M^{-1}\Sigma_{MA}\bigr)^{-1}
  \Sigma_{AM}\Sigma_M^{-1}
  =
  \Sigma_M^{-1}+H_A^\top R_A^{-1}H_A,
\end{equation}
with $H_A=\Sigma_{AM}\Sigma_M^{-1}$ and $R_A=\Sigma_A-\Sigma_{AM}\Sigma_M^{-1}\Sigma_{MA}$, which is the posterior-precision identity~\eqref{eq:posterior-precision}.

\paragraph{Data-processing inequality.}
If $M-Z-A$ is a Markov chain, i.e., $M$ and $A$ are conditionally independent given $Z$, then
\begin{equation}
  I(M;A)\leq I(M;Z).
  \label{eq:data-processing}
\end{equation}
In particular, this holds whenever $A$ is generated from $Z$ by a Markov kernel independent of $M$, as in the garbling of~\cref{lem:gaussian-garbling}.

\paragraph{Schur complement criterion.}
Let $P$ and $R$ be symmetric matrices and $Q$ a matrix of compatible dimensions. If $P\succ0$, then
\begin{equation}
  \begin{pmatrix} P & Q \\ Q^\top & R \end{pmatrix}\succeq0
  \quad\Longleftrightarrow\quad
  R-Q^\top P^{-1}Q\succeq0,
  \label{eq:schur-criterion}
\end{equation}
and, symmetrically, if $R\succ0$, then
\begin{equation}
  \begin{pmatrix} P & Q \\ Q^\top & R \end{pmatrix}\succeq0
  \quad\Longleftrightarrow\quad
  P-QR^{-1}Q^\top\succeq0.
  \label{eq:schur-criterion-R}
\end{equation}
Indeed, with the invertible matrix $T:=\bigl(\begin{smallmatrix} I & 0 \\ Q^\top P^{-1} & I \end{smallmatrix}\bigr)$,
\begin{equation}
  \begin{pmatrix} P & Q \\ Q^\top & R \end{pmatrix}
  =
  T
  \begin{pmatrix} P & 0 \\ 0 & R-Q^\top P^{-1}Q \end{pmatrix}
  T^\top,
\end{equation}
so by the congruence lemma~\eqref{eq:congruence-order} the block matrix is positive semidefinite if and only if the block-diagonal matrix is, i.e., since $P\succ0$, if and only if $R-Q^\top P^{-1}Q\succeq0$. The second statement follows by interchanging the roles of the two diagonal blocks. See, e.g., \citet[Section~7.7]{hornjohnson2012}. Both equivalences also hold with $\succ$ in place of $\succeq$, since congruence by an invertible matrix also preserves positive definiteness: for $u\neq0$, $u^\top TAT^\top u=(T^\top u)^\top A\,(T^\top u)$ with $T^\top u\neq0$.

\newpage
\section{AUXILIARY LEMMAS}
\label[appendix]{sec:supp-preliminaries}


\subsection{Posterior-Error Covariance Lemmas}
\label[appendix]{supp:posterior-error-lemmas}

For an arbitrary experiment $A$ about $M$, define its average posterior-error covariance by
\begin{align}
  \bar{\Gamma}_A
  := &
  \mathbb{E}_{P(M,A)} \Big[
                        (M-\mathbb{E}[M\mid A])
                        (M-\mathbb{E}[M\mid A])^\top
                        \Big]
  \label{eq:average-posterior-error}
  \\
  =  &
  \mathbb{E}_{P(A)}[\operatorname{Cov}(M\mid A)].
\end{align}
In general $\operatorname{Cov}(M\mid A)$ may depend on the observed value of $A$, so we use the barred notation $\bar{\Gamma}_A$ for its expectation. If $(M,A)$ is jointly Gaussian, then $\operatorname{Cov}(M\mid A)$ is constant almost surely. In that case,
\[
  \bar{\Gamma}_A=\operatorname{Cov}(M\mid A),
\]
and we use the unbarred notation $\Gamma_A$ for this common posterior covariance matrix, as in the main text. In particular,
\[
  \bar{\Gamma}_X=\Gamma_X,
  \qquad
  \bar{\Gamma}_Y=\Gamma_Y.
\]
The following lemma was proved by \citet[Proposition~5, Eq.~(23a)]{rioul2011} as an estimation-theoretic data-processing inequality.\footnote{In the notation of \citet[Section~I-A]{rioul2011}, the symbols $\leq$ and $\geq$ denote the Loewner order, and the conditional covariance is the averaged posterior-error covariance used here.}
\begin{lemma}[Garbling cannot improve posterior mean-square error]
  \label{lem:posterior-mse}
  Let $M$ have finite second moments. If $M-Z-A$ is a Markov chain, then
  \begin{equation}
    \bar{\Gamma}_Z\preceq\bar{\Gamma}_A.
    \label{eq:posterior-error-order}
  \end{equation}
\end{lemma}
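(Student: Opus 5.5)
The plan is to prove the Loewner inequality through the orthogonality principle. Since $\mathbb{E}[M\mid Z]$ minimizes mean-square error over all square-integrable functions of $Z$, and $\mathbb{E}[M\mid A]$ is one such function, we should get $\bar\Gamma_Z\preceq\bar\Gamma_A$. The catch is that $\mathbb{E}[M\mid A]$ is a function of $A$, not of $Z$. The Markov chain is exactly what lets us compare the two estimators.

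\textbf{Step 1: the cross term vanishes.} Write $\hat M_Z:=\mathbb{E}[M\mid Z]$ and $\hat M_A:=\mathbb{E}[M\mid A]$. Decompose
\begin{equation*}
M-\hat M_A=(M-\hat M_Z)+(\hat M_Z-\hat M_A).
\end{equation*}
We will show that the cross term $\mathbb{E}[(M-\hat M_Z)(\hat M_Z-\hat M_A)^\top]$ is zero. Condition on the pair $(Z,A)$. Because $M$ and $A$ are conditionally independent given $Z$, we have $\mathbb{E}[M\mid Z,A]=\mathbb{E}[M\mid Z]=\hat M_Z$. The factor $\hat M_Z-\hat M_A$ is $\sigma(Z,A)$-measurable, so the tower property gives
\begin{equation*}
\mathbb{E}\bigl[(M-\hat M_Z)(\hat M_Z-\hat M_A)^\top\bigr]
=\mathbb{E}\bigl[(\mathbb{E}[M\mid Z,A]-\hat M_Z)(\hat M_Z-\hat M_A)^\top\bigr]=0.
\end{equation*}
Finite second moments of $M$ guarantee that both conditional expectations are in $L^2$, so all these expectations are finite. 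The transposed cross term vanishes by the same argument.

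\textbf{Step 2: expand the error covariance.} Substituting the decomposition into the definition~\eqref{eq:average-posterior-error} gives
\begin{equation*}
\bar\Gamma_A=\bar\Gamma_Z+\mathbb{E}\bigl[(\hat M_Z-\hat M_A)(\hat M_Z-\hat M_A)^\top\bigr].
\end{equation*}
The second term is an expected outer product, so it is positive semidefinite: for any vector $u$, $u^\top(\cdot)\,u=\mathbb{E}[(u^\top(\hat M_Z-\hat M_A))^2]\ge0$. This yields $\bar\Gamma_Z\preceq\bar\Gamma_A$, which is~\eqref{eq:posterior-error-order}.

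\textbf{Main obstacle.} The only delicate point is justifying $\mathbb{E}[M\mid Z,A]=\mathbb{E}[M\mid Z]$ from the Markov property for general (non-Gaussian) kernels. I would do this via regular conditional distributions: conditional independence means $P_{M\mid Z,A}=P_{M\mid Z}$ almost surely. Integrating $m$ against these equal kernels gives the identity. Equivalently, for every bounded $\sigma(Z,A)$-measurable test function one checks $\mathbb{E}[(M-\hat M_Z)g(Z,A)]=0$, first on product functions $g=g_1(Z)g_2(A)$ using conditional independence, and then on all such $g$ by a monotone class argument.
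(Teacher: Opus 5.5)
Your proof is correct. The paper gives no proof of this lemma; it cites it from Rioul (2011, Proposition~5, Eq.~(23a)) as an estimation-theoretic data-processing inequality, so your argument differs mainly in being self-contained. Your mechanism is the standard one for such results. The Markov chain gives $\mathbb{E}[M\mid Z,A]=\mathbb{E}[M\mid Z]$, so $M-\hat M_Z$ is $L^2$-orthogonal to every square-integrable $\sigma(Z,A)$-measurable vector, and the Pythagorean expansion gives
\[
\bar\Gamma_A-\bar\Gamma_Z=\mathbb{E}\bigl[(\hat M_Z-\hat M_A)(\hat M_Z-\hat M_A)^\top\bigr]\succeq 0 .
\]
This buys two things over the citation. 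First, you get an explicit formula for the gap, so equality holds if and only if $\hat M_A=\hat M_Z$ almost surely. Second, the argument visibly needs only finite second moments and no Gaussianity of $(M,Z)$ or $(M,A)$, which is the generality required when the lemma is applied to an arbitrary coupling $Q\in\Delta(P)$ in the proof of \cref{prop:broja-lower-bound}. In that application $Z=(X,Y)$ and $A=X$ is a coordinate of $Z$, so $\sigma(Z,A)=\sigma(Z)$ and the step you flag as the main obstacle is immediate. Your monotone-class justification, extending conditional independence from bounded test functions to the integrable $M$, is only needed for the lemma in its stated generality. The citation, in turn, buys brevity and leaves those measure-theoretic details to a published source.
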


The next lemma states the exact average-posterior-error form needed for an arbitrary experiment $Z$. A closely related conditional maximum-entropy inequality appears as Lemma~3.2 of \citet{zhao2025}.

\begin{lemma}[Information lower bound from posterior-error covariance]
  \label{lem:info-lower-bound}
  Let
  \[
    M\sim\mathcal{N}(\mu_M,\Sigma_M),
    \qquad
    \Sigma_M\succ0,
  \]
  and let $Z$ be any experiment about $M$, such that $\bar{\Gamma}_Z\succ0$. Then
  \begin{equation}
    I(M;Z)
    \geq
    \frac12\log_2
    \frac{\det\Sigma_M}{\det\bar{\Gamma}_Z}.
    \label{eq:posterior-error-information-bound}
  \end{equation}
\end{lemma}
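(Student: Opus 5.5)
We have to prove \cref{lem:info-lower-bound}: for Gaussian $M$ and an arbitrary experiment $Z$ with $\bar\Gamma_Z\succ0$, the bound $I(M;Z)\ge\frac12\log_2(\det\Sigma_M/\det\bar\Gamma_Z)$ holds. The plan is to write $I(M;Z)=h(M)-h(M\mid Z)$ and bound $h(M\mid Z)$ from above by a Gaussian maximum-entropy argument applied conditionally. Then we pass from the conditional covariances to their average using concavity of $\log\det$.

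\textbf{Step 1: pointwise maximum entropy.} For $P_Z$-almost every $z$, the conditional law of $M$ given $Z=z$ has finite second moments, since $\mathbb{E}\|M\|^2<\infty$. Write $\Gamma(z):=\operatorname{Cov}(M\mid Z=z)$. The maximum-entropy property of the Gaussian gives
\[
h(M\mid Z=z)\le\tfrac12\log_2\bigl((2\pi e)^{d_M}\det\Gamma(z)\bigr).
\]
If $\Gamma(z)$ is singular or the conditional law has no density, the left side is $-\infty$ by convention, so the inequality still holds. Averaging over $z$ yields
\[
h(M\mid Z)\le\tfrac12\,\mathbb{E}_{P_Z}\log_2\bigl((2\pi e)^{d_M}\det\Gamma(Z)\bigr).
\]

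\textbf{Step 2: Jensen.} The function $\log\det$ is concave on positive semidefinite matrices (taking the value $-\infty$ on singular ones). Since $\bar\Gamma_Z=\mathbb{E}[\Gamma(Z)]$, Jensen's inequality gives
\[
\mathbb{E}\log\det\Gamma(Z)\le\log\det\bar\Gamma_Z,
\]
which is finite because $\bar\Gamma_Z\succ0$. Combining this with $h(M)=\frac12\log_2((2\pi e)^{d_M}\det\Sigma_M)$ from \eqref{eq:gaussian-entropy} yields \eqref{eq:posterior-error-information-bound}.

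\textbf{Main obstacle: measure-theoretic generality.} $Z$ is an arbitrary Markov kernel, so $h(M\mid Z)$ and the identity $I(M;Z)=h(M)-h(M\mid Z)$ need care. To avoid this, I would argue directly with relative entropy:
\[
I(M;Z)=\mathbb{E}_{P_Z}\,D\bigl(P_{M\mid Z=z}\,\|\,\mathcal N(\mu_M,\Sigma_M)\bigr),
\]
which is valid because $P_M=\mathcal N(\mu_M,\Sigma_M)$. Fix $z$, and let $m(z)$ and $\Gamma(z)$ be the conditional mean and covariance. The Gaussian reference decomposes as
\[
D\bigl(Q\,\|\,\mathcal N(\mu_M,\Sigma_M)\bigr)=D\bigl(Q\,\|\,\mathcal N(m,\Gamma)\bigr)+D\bigl(\mathcal N(m,\Gamma)\,\|\,\mathcal N(\mu_M,\Sigma_M)\bigr)\ge D\bigl(\mathcal N(m,\Gamma)\,\|\,\mathcal N(\mu_M,\Sigma_M)\bigr).
\]
This decomposition holds because the log-density ratio of two Gaussians is quadratic, so its expectation depends only on the first two moments. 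If $\Gamma(z)$ is singular, the divergence is $+\infty$.

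The Gaussian divergence equals
\[
\tfrac12\Bigl[\operatorname{tr}(\Sigma_M^{-1}\Gamma)+(m-\mu_M)^\top\Sigma_M^{-1}(m-\mu_M)-d_M+\log\tfrac{\det\Sigma_M}{\det\Gamma}\Bigr]
\]
in nats. We average over $z$ and use the law of total covariance, $\mathbb{E}\Gamma(Z)+\operatorname{Cov}(m(Z))=\Sigma_M$. The trace and mean terms then sum to exactly $d_M$ and cancel. What remains is $\frac12\,\mathbb{E}\log\frac{\det\Sigma_M}{\det\Gamma(Z)}$, and Step 2 bounds this below by $\frac12\log\frac{\det\Sigma_M}{\det\bar\Gamma_Z}$. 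Converting to base 2 finishes the proof, and this route sidesteps any density assumptions on $Z$.
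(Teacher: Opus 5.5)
Your proof is correct, but it takes a different route from the paper's.

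\textbf{The paper's route.} The paper never uses the pointwise conditional covariances $\operatorname{Cov}(M\mid Z=z)$. It writes $h(M\mid Z)=h\bigl(M-\mathbb{E}[M\mid Z]\mid Z\bigr)$ and drops the conditioning to get $\le h\bigl(M-\mathbb{E}[M\mid Z]\bigr)$. It then applies the Gaussian maximum-entropy bound once, to the unconditional error vector, whose covariance is $\bar\Gamma_Z$ by definition. In that argument, ``conditioning reduces entropy'' does the averaging work that your Jensen step does with concavity of $\log\det$. So the paper needs no matrix-concavity fact. It handles degeneracy by noting that $h(M\mid Z)=-\infty$ forces $I(M;Z)=+\infty$.

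\textbf{Your first route.} Pointwise maximum entropy followed by Jensen passes through a sharper intermediate bound, $I(M;Z)\ge\frac12\mathbb{E}\log_2\bigl(\det\Sigma_M/\det\operatorname{Cov}(M\mid Z)\bigr)$, before weakening it to the lemma. One small detail is worth adding: $\mathbb{E}\log\det\operatorname{Cov}(M\mid Z)$ is well defined in $[-\infty,\infty)$. This follows from $\log\det\Gamma\le d_M\log(\operatorname{tr}\Gamma/d_M)$ together with integrability of $\operatorname{tr}\operatorname{Cov}(M\mid Z)$.

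\textbf{Your relative-entropy route.} This route uses $I(M;Z)=\mathbb{E}_{P_Z}D(P_{M\mid Z}\,\|\,P_M)$, the Gaussian Pythagorean identity, and the law of total covariance. Its advantage is that it never invokes $h(M\mid Z)$ or the identity $I=h(M)-h(M\mid Z)$. It therefore holds for an arbitrary kernel with no density assumptions, and the singular or non-absolutely-continuous cases are absorbed into $D=+\infty$.

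\textbf{Comparison.} The paper's argument is shorter and uses only standard entropy facts. Yours is more robust measure-theoretically and makes the stronger averaged log-det bound explicit.
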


\begin{proof}[Proof of \cref{lem:info-lower-bound}]
  Since $M$ is Gaussian, Equation~\eqref{eq:gaussian-entropy} gives
  \begin{align}
    I(M;Z) & = h(M)-h(M\mid Z)
    \nonumber
    \\
           & =\frac12\log_2\big((2\pi e)^{d_M}\det \Sigma_M \big) -h(M\mid Z).
    \label{eq:info-lower-bound-expansion}
  \end{align}
  If $h(M\mid Z)=-\infty$, then $I(M;Z)=+\infty$ and there is nothing to prove, so assume $h(M\mid Z)>-\infty$. For the second term in \eqref{eq:info-lower-bound-expansion}, we have
  \begin{align}
    h(M\mid Z)
     & =
    h\big(M-\mathbb{E}[M\mid Z]\,\big|\,Z\big),
    \label{eq:shift-M}
    \\
     & \leq
    h\big(M-\mathbb{E}[M\mid Z]\big),
    \label{eq:drop-conditioning}
    \\
     & \leq
    h\big( \mathcal{N}(0,\bar{\Gamma}_Z) \big),
    \label{eq:Gaussian-maximum-entropy}
    \\
     & =
    \frac12\log_2\big((2\pi e)^{d_M}\det\bar{\Gamma}_Z\big).
    \label{eq:conditional-entropy-bound}
  \end{align}
  In \eqref{eq:shift-M} we used that the conditional mean $\mathbb{E}[M\mid Z]$ is a function of $Z$, so the conditional differential entropy is invariant under this shift. In \eqref{eq:drop-conditioning} we drop the conditioning on $Z$, which can only increase the differential entropy. In \eqref{eq:Gaussian-maximum-entropy} we apply the Gaussian maximum-entropy bound, which states that among all random vectors with a given covariance matrix, the Gaussian distribution has the largest differential entropy. Here, the covariance of $M-\mathbb{E}[M\mid Z]$ is exactly $\bar{\Gamma}_Z$ by \eqref{eq:average-posterior-error}. Substituting \eqref{eq:conditional-entropy-bound} into \eqref{eq:info-lower-bound-expansion} cancels the factors $(2\pi e)^{d_M}$ and yields \eqref{eq:posterior-error-information-bound}.
\end{proof}

\subsection{Linear-Gaussian Garbling Kernels}
\label[appendix]{app:gaussian-garbling}

The next lemma shows that, whenever $\Gamma_Z \preceq \Gamma_A$, a copy of the experiment $A$ can be generated from $Z$ by an explicit linear-Gaussian kernel; it is what makes the attaining coupling in \cref{thm:gaussian-broja} jointly Gaussian.

\begin{lemma}[Explicit linear-Gaussian garbling]
  \label{lem:gaussian-garbling}
  Let $M \sim \mathcal{N}(\mu_M, \Sigma_M)$ with $\Sigma_M \succ 0$, and let
  \begin{align}
    A & = \mu_A + H_A (M - \mu_M) + N_A, & N_A & \sim \mathcal{N}(0, R_A),
    \label{eq:gaussian-experiment-A}
    \\
    Z & = \mu_Z + H_Z (M - \mu_M) + N_Z, & N_Z & \sim \mathcal{N}(0, R_Z),
  \end{align}
  with $R_A \succ 0$, $R_Z \succ 0$, and $N_A, N_Z$ independent of $M$. Let $\Gamma_A := \operatorname{Cov}(M \mid A)$ and $\Gamma_Z := \operatorname{Cov}(M \mid Z)$, write $J_A := H_A^\top R_A^{-1} H_A$ and $J_Z := H_Z^\top R_Z^{-1} H_Z$, and assume
  \begin{equation}
    \Gamma_Z \preceq \Gamma_A .
    \label{eq:garbling-hypotheses}
  \end{equation}
  Then there exist a matrix $G$ and a matrix $S \succeq 0$ such that, for $W \sim \mathcal{N}(0, S)$ independent of $(M, Z)$, the variable
  \begin{equation}
    A' := \mu_A + G (Z - \mu_Z) + W
    \label{eq:garbled-copy}
  \end{equation}
  satisfies $(M, A') \overset{d}{=} (M, A)$ and the Markov chain $M - Z - A'$. In particular, $(M, Z, A')$ is jointly Gaussian. If $J_Z \succ 0$, one can take
  \begin{equation}
    G := H_A J_Z^{-1} H_Z^\top R_Z^{-1},
    \qquad
    S := R_A - H_A J_Z^{-1} H_A^\top .
    \label{eq:garbling-G-S}
  \end{equation}
\end{lemma}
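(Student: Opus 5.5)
We need to prove Lemma (explicit linear-Gaussian garbling). We work with centered variables, taking $\mu_A=\mu_Z=\mu_M=0$. The plan is to first handle the case $J_Z\succ0$ with the explicit $G,S$ of \eqref{eq:garbling-G-S}, and then reduce the general case to it.

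\textbf{Case $J_Z\succ0$.} By the posterior-precision identity \eqref{eq:posterior-precision}, $\Gamma_Z^{-1}=\Sigma_M^{-1}+J_Z$ and $\Gamma_A^{-1}=\Sigma_M^{-1}+J_A$. Inversion reverses the Loewner order, so the hypothesis $\Gamma_Z\preceq\Gamma_A$ is equivalent to $J_A\preceq J_Z$. Define $A'=GZ+W$ with $W\sim\mathcal N(0,S)$ independent of $(M,Z)$. Then $A'$ is a garbling of $Z$, so the Markov chain $M-Z-A'$ holds and $(M,Z,A')$ is jointly Gaussian. Since both $(M,A')$ and $(M,A)$ are centered Gaussian, it suffices to match second moments. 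First,
\[
GH_Z=H_AJ_Z^{-1}H_Z^\top R_Z^{-1}H_Z=H_A,
\]
so $A'=H_AM+(GN_Z+W)$, with noise independent of $M$. Hence $\operatorname{Cov}(A',M)=H_A\Sigma_M$, which matches $\operatorname{Cov}(A,M)$. For the noise covariance, $GR_ZG^\top=H_AJ_Z^{-1}H_Z^\top R_Z^{-1}H_ZJ_Z^{-1}H_A^\top=H_AJ_Z^{-1}H_A^\top$, and therefore
\[
GR_ZG^\top+S=R_A,
\]
as required.

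The remaining step is $S\succeq0$, i.e.\ $H_AJ_Z^{-1}H_A^\top\preceq R_A$. Congruence by $R_A^{-1/2}$ reduces this to $BJ_Z^{-1}B^\top\preceq I$ with $B:=R_A^{-1/2}H_A$. Since $J_Z\succ0$, the Schur complement criterion \eqref{eq:schur-criterion} applied to $\bigl(\begin{smallmatrix}J_Z & B^\top\\ B & I\end{smallmatrix}\bigr)$, in both orders, shows this is equivalent to $B^\top B\preceq J_Z$. But $B^\top B=J_A$, and $J_A\preceq J_Z$ holds by the first step.

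\textbf{General case ($J_Z$ singular).} This is the main obstacle. I would restrict to the range of $J_Z$. Since $J_A\preceq J_Z$, we have $\ker J_Z\subseteq\ker J_A=\ker H_A$ (using $R_A\succ0$). Let $P$ be a matrix whose columns form an orthonormal basis of $\operatorname{range}J_Z$. The kernel inclusion gives $H_A=H_APP^\top$ and $H_Z=H_ZPP^\top$. Replace $J_Z^{-1}$ by the Moore--Penrose pseudoinverse $J_Z^{+}$ in \eqref{eq:garbling-G-S}. Then $GH_Z=H_AJ_Z^{+}J_Z=H_APP^\top=H_A$, and the identity $GR_ZG^\top=H_AJ_Z^{+}H_A^\top$ goes through unchanged. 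For $S\succeq0$, I would apply the same Schur-complement argument to the compressed matrices $P^\top J_ZP\succ0$ and $P^\top J_AP\preceq P^\top J_ZP$, with $B:=R_A^{-1/2}H_AP$. The rest of the proof is then identical to the first case.
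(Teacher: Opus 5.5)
Your proof is correct. For $J_Z\succ0$ it is essentially the paper's Case~1. Your $S\succeq0$ step uses $\bigl(\begin{smallmatrix}J_Z & H_A^\top R_A^{-1/2}\\ R_A^{-1/2}H_A & I\end{smallmatrix}\bigr)$, and this is the paper's Schur-complement argument on $\bigl(\begin{smallmatrix}R_A & H_A\\ H_A^\top & J_Z\end{smallmatrix}\bigr)$ after congruence by $\operatorname{diag}(R_A^{-1/2},I)$ and swapping the blocks.

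The singular case is where you genuinely differ. The paper takes the $\Sigma_M^{-1}$-orthogonal projector with kernel $\ker J_Z$ and uses it to split $M-\mu_M$ into independent Gaussian pieces $Bm$ and $M_2\in\ker J_Z$. It notes that both channels ignore $M_2$ and applies Case~1 to the reduced target $m$, where $B^\top J_ZB\succ0$. It then reassembles the law of $(M,A')$ using the independence of $M_2$.

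You stay in the original coordinates with $J_Z^{+}$, so everything is algebraic. The identity $J_Z^{+}J_Z=PP^\top$ gives $GH_Z=H_A$, and the Penrose identity $J_Z^{+}J_ZJ_Z^{+}=J_Z^{+}$ gives the noise covariance. Your compressed Schur argument gives $S\succeq0$ once you write down $J_Z^{+}=P(P^\top J_ZP)^{-1}P^\top$. Note also that $H_Z=H_ZPP^\top$ comes from $\ker J_Z=\ker H_Z$ (because $R_Z\succ0$), not from the inclusion $\ker J_Z\subseteq\ker J_A$.

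What each route buys: yours gives one formula valid in all cases and avoids the probabilistic reduction, while the paper's reuses Case~1 verbatim with genuine inverses. The two constructions in fact produce the same kernel. Since $H_A$ and $H_Z$ vanish on $\ker J_Z$, and $P^\top B$ is invertible for the paper's $B$, one gets $H_AB(B^\top J_ZB)^{-1}B^\top H_Z^\top=H_AJ_Z^{+}H_Z^\top$. So \eqref{eq:reduced-G-S} is exactly your pseudoinverse formula.
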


\cref{lem:gaussian-garbling} makes explicit the matrix $G$ whose existence is asserted in \citet[Lemma~5]{shang2012noisy} and used, following \citet{gerdes2015equivalence}, in the proof of \citet[Theorem~2, Eqs.~(74)--(75)]{venkatesh2022partial}.

\begin{proof}
  \textbf{Case 1: $J_Z \succ 0$.} We show that $G$ and $S$ defined in~\eqref{eq:garbling-G-S} have the stated properties.

  \textbf{Step 1: $J_A \preceq J_Z$.} By the posterior-precision identity~\eqref{eq:posterior-precision}, $\Gamma_A^{-1} = \Sigma_M^{-1} + J_A$ and $\Gamma_Z^{-1} = \Sigma_M^{-1} + J_Z$. Since $0 \prec \Gamma_Z \preceq \Gamma_A$ and matrix inversion reverses the Loewner order (\cref{subsec:gaussian-matrix-tools}), $\Gamma_A^{-1} \preceq \Gamma_Z^{-1}$, and therefore
  \begin{equation}
    J_A = \Gamma_A^{-1} - \Sigma_M^{-1}
    \preceq
    \Gamma_Z^{-1} - \Sigma_M^{-1} = J_Z .
    \label{eq:JA-JZ}
  \end{equation}

  \textbf{Step 2: $G$ reproduces the channel matrix.} By the definition of $J_Z$,
  \begin{equation}
    G H_Z = H_A J_Z^{-1} \bigl(H_Z^\top R_Z^{-1} H_Z\bigr) = H_A J_Z^{-1} J_Z = H_A .
    \label{eq:G-reproduces}
  \end{equation}

  \textbf{Step 3: $S \succeq 0$.} Consider the block matrix
  \begin{equation}
    \begin{pmatrix} R_A & H_A \\ H_A^\top & J_A \end{pmatrix}.
    \label{eq:block-A}
  \end{equation}
  Since $R_A \succ 0$ and its Schur complement with respect to $R_A$ is $J_A - H_A^\top R_A^{-1} H_A = 0 \succeq 0$, the matrix~\eqref{eq:block-A} is positive semidefinite by the Schur complement criterion~\eqref{eq:schur-criterion}. Adding the positive-semidefinite matrix $\operatorname{diag}(0, J_Z - J_A)$, see \eqref{eq:JA-JZ}, gives
  \begin{equation}
    \begin{pmatrix} R_A & H_A \\ H_A^\top & J_Z \end{pmatrix} \succeq 0 .
    \label{eq:block-Z}
  \end{equation}
  Since $J_Z \succ 0$, the criterion~\eqref{eq:schur-criterion-R} applied to~\eqref{eq:block-Z} shows that the Schur complement $R_A - H_A J_Z^{-1} H_A^\top = S$ is positive semidefinite.

  \textbf{Step 4: the law of $(M, A')$.} Substituting $Z - \mu_Z = H_Z (M - \mu_M) + N_Z$ into \eqref{eq:garbled-copy} and using \eqref{eq:G-reproduces},
  \begin{equation}
    A' = \mu_A + H_A (M - \mu_M) + N', \qquad N' := G N_Z + W .
  \end{equation}
  This has the same form as $A$ in \eqref{eq:gaussian-experiment-A}, with the same mean and channel matrix. We only need to check that the noise $N'$, which has zero mean, has the same covariance as $N_A$. Indeed, $N'$ is Gaussian and independent of $M$. Since $N_Z$ and $W$ are independent, and $J_Z$ is symmetric, its covariance is
  \begin{equation}
    \begin{aligned}
      \operatorname{Cov}(N')
       & = G\operatorname{Cov}(N_Z)G^\top + \operatorname{Cov}(W)
      \\
       & = G R_Z G^\top + S
      \\
       & = H_A J_Z^{-1} H_Z^\top R_Z^{-1}\, R_Z\, R_Z^{-1} H_Z J_Z^{-1} H_A^\top + S
      \\
       & = H_A J_Z^{-1} \bigl(H_Z^\top R_Z^{-1} H_Z\bigr) J_Z^{-1} H_A^\top + S
      \\
       & = H_A J_Z^{-1} H_A^\top + S = R_A .
    \end{aligned}
  \end{equation}
  Thus $(M, A')$ is jointly Gaussian with the same mean, channel matrix, and noise covariance as $(M, A)$, so $(M, A') \overset{d}{=} (M, A)$. The Markov chain $M - Z - A'$ holds because $A'$ is a function of $Z$ and of $W$, which is independent of $(M, Z)$.

  \textbf{Case 2: $J_Z$ singular.} Let $U := \ker J_Z$. Step~1 does not use the invertibility of $J_Z$, so \eqref{eq:JA-JZ} holds, and for every $u \in U$,
  \begin{equation}
    0 \leq u^\top J_A u \leq u^\top J_Z u = 0 .
  \end{equation}
  Since $u^\top J_A u = \|R_A^{-1/2} H_A u\|^2$ and $u^\top J_Z u = \|R_Z^{-1/2} H_Z u\|^2$, this gives
  \begin{equation}
    H_A u = 0, \qquad H_Z u = 0 \qquad \text{for all } u \in U .
    \label{eq:H-vanish-on-U}
  \end{equation}
  Let $\Pi$ be the projector with kernel $U$ that is orthogonal with respect to the inner product $\langle u, v \rangle := u^\top \Sigma_M^{-1} v$, let $k := d_M - \dim U$, and let $B \in \mathbb{R}^{d_M \times k}$ have columns forming a basis of $\operatorname{range} \Pi$. Split
  \begin{equation}
    \begin{aligned}
      M - \mu_M & = M_1 + M_2,                                                      \\
      M_1       & := \Pi (M - \mu_M) = B m, \qquad m := (B^\top B)^{-1} B^\top M_1, \\
      M_2       & := (I - \Pi)(M - \mu_M) \in U .
    \end{aligned}
  \end{equation}
  The vectors $M_1$ and $M_2$ are jointly Gaussian with $\operatorname{Cov}(M_1, M_2) = \Pi \Sigma_M (I - \Pi)^\top = 0$, hence independent, and $m \in \mathbb{R}^k$ is Gaussian with positive-definite covariance, since $\operatorname{Cov}(M_1) = \Pi \Sigma_M \Pi^\top$ has rank $k$. By \eqref{eq:H-vanish-on-U}, $H_A M_2 = H_Z M_2 = 0$, so
  \begin{equation}
    A = \mu_A + (H_A B)\, m + N_A,
    \qquad
    Z = \mu_Z + (H_Z B)\, m + N_Z .
    \label{eq:reduced-channels}
  \end{equation}
  Thus $(m, A, Z)$ is an instance of the setting of the lemma with target $m$, channel matrices $H_A B$ and $H_Z B$, and information matrices $B^\top J_A B$ and $B^\top J_Z B$. Steps~2--4 of Case~1 use the hypothesis only through \eqref{eq:JA-JZ} and $J_Z \succ 0$, and both hold for the reduced instance: $B^\top J_A B \preceq B^\top J_Z B$ by congruence, and $B^\top J_Z B \succ 0$ because $Bc \in \operatorname{range}\Pi \setminus \{0\}$ is not in $U = \ker J_Z$ for $c \neq 0$. Case~1 applied to the reduced instance therefore yields, with
  \begin{equation}
    G := (H_A B)(B^\top J_Z B)^{-1} (H_Z B)^\top R_Z^{-1},
    \qquad
    S := R_A - (H_A B)(B^\top J_Z B)^{-1} (H_A B)^\top \succeq 0,
    \label{eq:reduced-G-S}
  \end{equation}
  a variable $A' := \mu_A + G(Z - \mu_Z) + W$, $W \sim \mathcal{N}(0, S)$ independent of $(M, Z)$, such that $(m, A') \overset{d}{=} (m, A)$ and $m - Z - A'$. The Markov chain $M - Z - A'$ holds because $A'$ is a function of $Z$ and $W$. Finally, $M = \mu_M + Bm + M_2$, where $M_2$ is independent of $(m, A)$ by \eqref{eq:reduced-channels} and of $(m, A')$ because $A'$ depends on $(m, N_Z, W)$ only; together with $(m, A') \overset{d}{=} (m, A)$ this gives $(M, A') \overset{d}{=} (M, A)$.
\end{proof}

\begin{remark}[Two-stage interpretation]
  The kernel~\eqref{eq:garbling-G-S} factors through the sufficient statistic $T_Z := H_Z^\top R_Z^{-1}(Z - \mu_Z)$, which satisfies $T_Z \sim \mathcal{N}(J_Z (M-\mu_M), J_Z)$ conditionally on $M$: since $G(Z - \mu_Z) = H_A J_Z^{-1} T_Z$, the copy $A'$ is obtained by first forming from $Z$ the unbiased estimate $J_Z^{-1} T_Z$ of $M - \mu_M$, whose conditional covariance is $J_Z^{-1}$, and then passing it through the channel of $A$ with the reduced noise covariance $S = R_A - H_A J_Z^{-1} H_A^\top$.
\end{remark}

\newpage
\section{PROOFS FOR SECTION~\closedformsecref}
\label[appendix]{sec:supp-section3}


\subsection{Proof of \texorpdfstring{\cref{prop:broja-lower-bound}}{Proposition \ref{prop:broja-lower-bound}}}
\label[appendix]{supp:proof-broja-lower-bound}
This section proves \cref{prop:broja-lower-bound} from
\cref{sec:closed-form-broja}.

We first restate the proposition.

\noindent\textbf{\cref{prop:broja-lower-bound} (BROJA union information lower bound).}
Under \cref{ass:gaussian-marginals},
\[
  I_{\cup}^{\BROJA}(M;X,Y)
  \geq
  \frac12\log_2
  \frac{\det\Sigma_M}{\det\Gamma^\star}.
\]

\begin{proof}
  Fix $Q\in\Delta(P)$ and regard the pair of sources $(X,Y)$, with joint law $Q$, as a single experiment about $M$. Denote by $\bar{\Gamma}^{Q}_{XY}$ its average posterior-error covariance, as defined in Equation~\eqref{eq:average-posterior-error}.

  Since $X$ is a coordinate of $(X,Y)$, the Markov chain
  \[
    M-(X,Y)-X
  \]
  holds under $Q$. Moreover, $Q_{MX}=P_{MX}$ by~\cref{def:compatible-couplings}, and $(M,X)$ is jointly Gaussian under $P_{MX}$ by \cref{ass:gaussian-marginals}, so the average posterior-error covariance of $X$ under $Q$ equals $\Gamma_X$. \cref{lem:posterior-mse} therefore gives
  \[
    \bar{\Gamma}^{Q}_{XY}\preceq\Gamma_X.
  \]
  The same argument with the roles of $X$ and $Y$ interchanged gives
  \[
    \bar{\Gamma}^{Q}_{XY}\preceq\Gamma_Y.
  \]

  If $I_Q(M;X,Y)=+\infty$, the claimed inequality is immediate. Otherwise $h(M\mid X,Y)>-\infty$, which forces $\bar{\Gamma}^{Q}_{XY}\succ0$: if $\bar{\Gamma}^{Q}_{XY}$ were singular, the error vector $M-\mathbb{E}_Q[M\mid X,Y]$ would be supported on a proper subspace, so that $h(M\mid X,Y)\leq h(M-\mathbb{E}_Q[M\mid X,Y])=-\infty$. Hence $\bar{\Gamma}^{Q}_{XY}\in\mathcal{F}_{X,Y}$, the feasible set in Equation~\eqref{eq:feasible-posterior-covariances}, and the optimality of $\Gamma^\star$ established in \cref{prop:max-det} yields
  \[
    \det\bar{\Gamma}^{Q}_{XY}\leq\det\Gamma^\star.
  \]
  Applying \cref{lem:info-lower-bound} to the experiment $(X,Y)$ and then this determinant inequality,
  \[
    I_Q(M;X,Y)
    \geq
    \frac12\log_2
    \frac{\det\Sigma_M}{\det\bar{\Gamma}^{Q}_{XY}}
    \geq
    \frac12\log_2
    \frac{\det\Sigma_M}{\det\Gamma^\star}.
  \]
  Since $Q\in\Delta(P)$ was arbitrary, taking the infimum over $\Delta(P)$ in the definition~\eqref{eq:broja-union} of the BROJA union information proves Equation~\eqref{eq:broja-lower-bound}.
\end{proof}

\subsection{Closed-form redundancy and synergy for a jointly Gaussian triple}
\label[appendix]{supp:proof-full-pid}

Assume, in addition to \cref{ass:gaussian-marginals}, that the observed triple $(M,X,Y)$ is nondegenerate jointly Gaussian, and define
\[
  \Gamma_{XY}:=\Cov(M\mid X,Y).
\]
Let $\RI^{\BROJA}$ and $\SI^{\BROJA}$ denote the BROJA redundancy and synergy, respectively. Then
\[
  \RI^{\BROJA}
  =
  \frac12\log_2
  \frac{\det\Sigma_M\,\det\Gamma^\star}
  {\det\Gamma_X\,\det\Gamma_Y},
\]
and
\[
  \SI^{\BROJA}
  =
  \frac12\log_2
  \frac{\det\Gamma^\star}
  {\det\Gamma_{XY}}.
\]
Together with the two unique-information formulas in \cref{thm:gaussian-broja}, this gives all four BROJA PID components in closed form.

\begin{proof}
  The BROJA redundancy satisfies
  \[
    \RI^{\BROJA}
    =
    I(M;X)-\UI_X^{\BROJA}.
  \]
  By Equation~\eqref{eq:ui-union-relation} and \cref{thm:gaussian-broja},
  \[
    \UI_X^{\BROJA}
    =
    I_{\cup}^{\BROJA}(M;X,Y)-I(M;Y),
  \]
  so
  \[
    \RI^{\BROJA}
    =
    I(M;X)+I(M;Y)-I_{\cup}^{\BROJA}(M;X,Y).
  \]
  Since $(M,X)$ and $(M,Y)$ are jointly Gaussian,
  \[
    I(M;X)
    =
    \frac12\log_2
    \frac{\det\Sigma_M}{\det\Gamma_X},
    \qquad
    I(M;Y)
    =
    \frac12\log_2
    \frac{\det\Sigma_M}{\det\Gamma_Y}.
  \]
  \cref{thm:gaussian-broja} gives
  \[
    I_{\cup}^{\BROJA}(M;X,Y)
    =
    \frac12\log_2
    \frac{\det\Sigma_M}{\det\Gamma^\star}.
  \]
  Substituting these three expressions yields
  \[
    \RI^{\BROJA}
    =
    \frac12\log_2
    \frac{\det\Sigma_M\,\det\Gamma^\star}
    {\det\Gamma_X\,\det\Gamma_Y}.
  \]

  For synergy, the BROJA PID identity gives
  \[
    \SI^{\BROJA}
    =
    I(M;X,Y)-I_{\cup}^{\BROJA}(M;X,Y).
  \]
  Because the observed triple $(M,X,Y)$ is jointly Gaussian,
  \[
    I(M;X,Y)
    =
    \frac12\log_2
    \frac{\det\Sigma_M}{\det\Gamma_{XY}}.
  \]
  Subtracting the union-information formula from \cref{thm:gaussian-broja} therefore gives
  \[
    \SI^{\BROJA}
    =
    \frac12\log_2
    \frac{\det\Gamma^\star}{\det\Gamma_{XY}},
  \]
  which proves the result.
\end{proof}

\clearpage
\section{NUMERICAL EXPERIMENTS AND COMPUTATIONAL COST}
\label[appendix]{sec:supp-numerical}
\label[appendix]{sec:computing-infrastructure}

\paragraph{Computing infrastructure.}
Experiments were run on an internal SLURM-managed computing cluster. Within each compute job, PID computations were performed serially. Each CPU job was allocated 8 CPU cores and 8--12 GB of host memory. Each GPU job was allocated one NVIDIA RTX 6000 Ada Generation GPU with 48 GB of GPU memory, together with 8 CPU cores and 24 GB of host memory. Runtime comparisons were executed on cluster nodes equipped with Intel Xeon E5-2680 v2/v3/v4 or Intel Xeon E5-2660 v4 processors, using eight CPU threads per job.

\paragraph{Computational cost}

Given the covariance matrices, evaluating Theorem~1 requires the two conditional covariances $\Gamma_X$ and $\Gamma_Y$, obtained from the Schur complements~\eqref{eq:posterior-schur} at cost $O(d_X^3 + d_M d_X^2 + d_M^2 d_X)$ and $O(d_Y^3 + d_M d_Y^2 + d_M^2 d_Y)$, followed by a single symmetric-definite generalized eigenproblem of size $d_M$, at cost $O(d_M^3)$. The total cost is therefore $O(d_X^3 + d_Y^3 + d_M^3)$, with no iterations, initialization, or stopping criteria. The redundancy in Corollary~1 requires only $\Sigma_M$, $\Gamma_X$, $\Gamma_Y$, and $\Gamma^\star$, which are already available. The synergy additionally requires $\Gamma_{XY} = \operatorname{Cov}(M \mid X,Y)$. Writing $s=d_X+d_Y$, computing $\Gamma_{XY}$ via its Schur complement costs $O(s^3 + d_M s^2 + d_M^2 s)$, while computing its log-determinant costs $O(d_M^3)$. Thus, the overall cubic-order bound $O(d_X^3 + d_Y^3 + d_M^3)$ remains unchanged. By contrast, the existing gradient-based methods repeatedly perform comparable linear-algebra operations throughout an iterative optimization procedure.

\paragraph{Covariance set-up}
In our numerical experiments, we used population covariances, so no finite-sample bias correction was required. In each trial all methods were evaluated on the same generated covariance.
\subsection{Runtime comparisons}
\label[appendix]{subsec:supp-runtime-comparisons}

We compared Eigen-PID, Thin-PID, and Tilde-PID using joint population covariance matrices of a Gaussian target \(M\) and two sources \(X,Y\). We used the model
\[
  M\sim\mathcal{N}(0,I_{d_M}),\qquad
  X=H_XM+\varepsilon_X,\qquad
  Y=H_YM+\varepsilon_Y.
\]
The channel matrices \(H_X\in\mathbb{R}^{d_X\times d_M}\) and \(H_Y\in\mathbb{R}^{d_Y\times d_M}\) had independent standard normal entries. The noise vectors \(\varepsilon_X\sim\mathcal{N}(0,I_{d_X})\) and \(\varepsilon_Y\sim\mathcal{N}(0,I_{d_Y})\) were independent of each other and of \(M\).

For each value of \(d_X\) within each configuration, we generated five joint population covariance matrices of \((M,X,Y)\) using different random seeds and evaluated all three methods on the same matrices.

All three methods were run on CPU using the same computational resources. We timed each complete PID computation, excluding construction of the input covariance matrix. Thin-PID and Tilde-PID used their default optimization settings.

\Cref{fig:runtime_speedup} shows the runtime comparison with \(d_M=1000\) and \(d_Y=1000\) while varying \(d_X\). \Cref{fig:runtime_grid}A extends this comparison to additional fixed values of \(d_M\) and \(d_Y\). In \Cref{fig:runtime_grid}B, we vary the dimension of \(M\) while fixing the dimensions of \(X\) and \(Y\).

\begin{figure}[!htbp]
  \centering
  \includegraphics{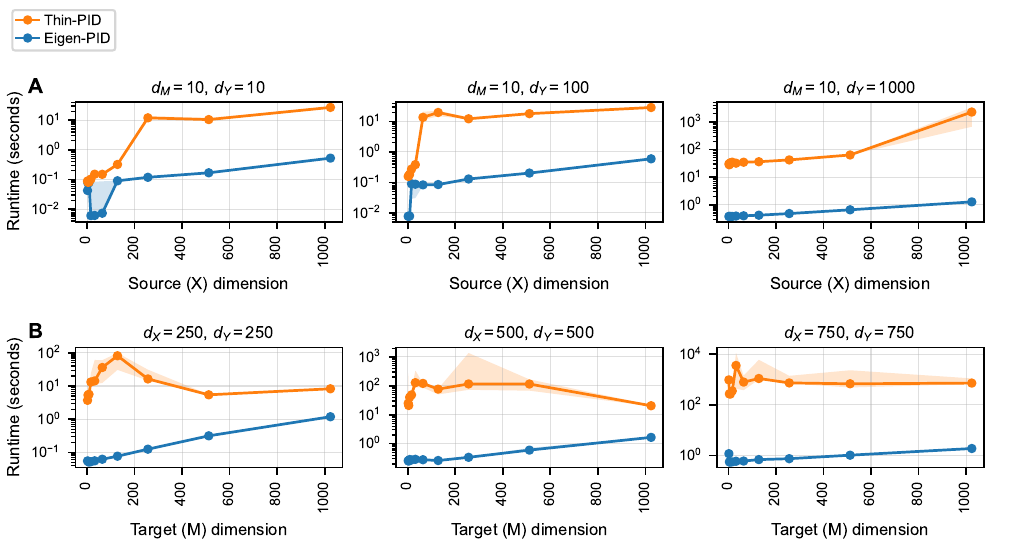}
  \caption{\textbf{Runtime comparison across target and source dimensions.} Runtime of Eigen-PID and Thin-PID under the same random Gaussian channel construction used in \cref{fig:runtime_speedup}. \textbf{(A)} Varying $d_X$ for fixed combinations of $d_M$ and $d_Y$. \textbf{(B)} Varying $d_M$ for fixed, equal source dimensions $d_X=d_Y$. Both methods were evaluated on the same randomly generated problem instances at each dimensional configuration. Lines show median runtimes across instances; shaded regions indicate the 25th--75th percentile range. Runtime is shown on a logarithmic scale. Eigen-PID remains substantially faster than Thin-PID across both sweeps, with the runtime gap generally increasing at larger dimensions. We did not include Tilde-PID in these extended sweeps because it was generally slower than Thin-PID in \cref{fig:runtime_speedup}, consistent with the findings of \citet{zhao2025}.}
  \label{fig:runtime_grid}
\end{figure}

\FloatBarrier
\clearpage

\subsection{Gain sweep}
\label[appendix]{subsec:supp-benchmark-examples}
We reproduced the gain-sweep benchmark from Example~8 of \citet{venkatesh2023}, using the experimental settings of \citet{zhao2025}.

Thin-PID and Tilde-PID compute \(I(M;(X,Y))\) using the regularized source covariance
\[
  \Sigma_{(X,Y)} + 10^{-7}I.
\]
To compare PID values under the same numerical convention, we applied this regularization to Eigen-PID as well in the gain-sweep comparison. For Tilde-PID, we evaluated the final union information without regularization, while retaining regularization during optimization.

We consider a two-dimensional Gaussian target
\[
  M=(M_1,M_2)^\top,
  \qquad
  M_1,M_2\sim\mathcal{N}(0,1),
\]
with \(M_1\) and \(M_2\) independent. The sources are defined by
\[
  X_1=\alpha M_1+\varepsilon_{X,1},
  \qquad
  Y_1=M_1+\varepsilon_{Y,1},
\]
where $\alpha$ is the gain value and
\[
  X_2=M_2+\varepsilon_{X,2},
  \qquad
  Y_2=\sqrt{2}\,M_2+\varepsilon_{Y,2},
\]
where all noise variables are independent standard Gaussians and independent of \(M\). Since the two triples \((M_1,X_1,Y_1)\) and \((M_2,X_2,Y_2)\) are independent, the BROJA PID is additive across the corresponding scalar subsystems \citep{bertschinger2014quantifying}. Their unique-information contributions can therefore be summed.

In matrix form,
\[
  X=H_XM+\varepsilon_X,
  \qquad
  Y=H_YM+\varepsilon_Y,
\]
with
\[
  H_X=\operatorname{diag}(\alpha,1),
  \qquad
  H_Y=\operatorname{diag}(1,\sqrt{2}).
\]

Varying \(\alpha\) changes the posterior uncertainty about \(M_1\) after observing \(X_1\):
\[
  \operatorname{Var}(M_1\mid X_1)=\frac{1}{1+\alpha^2},
  \qquad
  \operatorname{Var}(M_1\mid Y_1)=\frac{1}{2}.
\]
For \(\alpha<1\), \(X_1\) leaves greater posterior uncertainty and is therefore less informative than \(Y_1\); at \(\alpha=1\), the two sources are equally informative; and for \(\alpha>1\), \(X_1\) leaves less posterior uncertainty and becomes more informative. Thus, the source carrying unique information about \(M_1\) switches from \(Y\) to \(X\) at \(\alpha=1\). Along \(M_2\), \(Y_2\) remains more informative than \(X_2\). The sweep therefore crosses from a regime in which \(Y\) dominates both target directions to one in which each source is more informative along a different direction.

\FloatBarrier
\clearpage

\subsection{Initializing numerical estimators at the Eigen-PID optimum}
\label[appendix]{sec:optimal-covariance-initialization}
We initialized Thin-PID and Tilde-PID at the optimal Gaussian coupling constructed by Eigen-PID and ran both optimizers with their default settings. We then compared their final union-information values, evaluated without regularization, with the Eigen-PID value. Neither method produced a systematic reduction beyond numerical precision (\cref{fig:optimal-initialization}).

\begin{figure}[!htbp]
  \centering
  \includegraphics{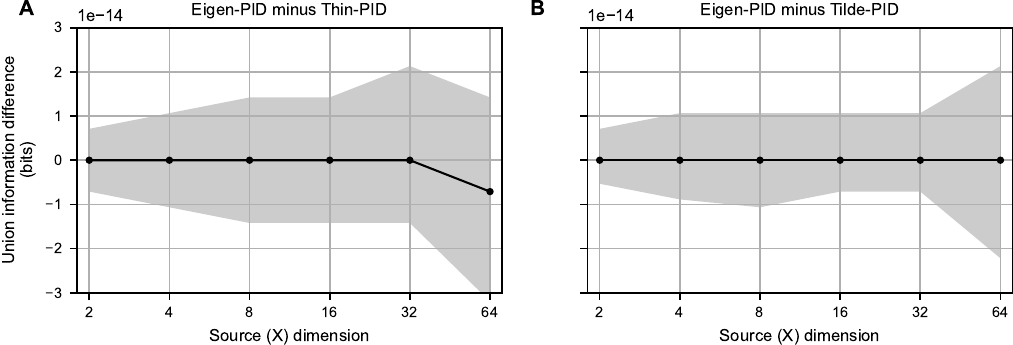}
  \caption{\textbf{Numerical validation of the Eigen-PID optimum.} Union-information differences after initializing Thin-PID and Tilde-PID at the Eigen-PID optimum: \textbf{(A)} $I_{\cup}^{\mathrm{Eigen}}-I_{\cup}^{\mathrm{Thin}}$; \textbf{(B)} $I_{\cup}^{\mathrm{Eigen}}-I_{\cup}^{\mathrm{Tilde}}$. The dimensional sweep follows the setup described in \autoref{fig:runtime_speedup}, with $d_M=d_Y=10$ and $d_X$ varying up to 64. Markers show medians across random problem instances; shaded regions show the 25th--75th percentile range.}

  \label{fig:optimal-initialization}
\end{figure}

Below, we explain how we obtained the optimal initialization from the Eigen-PID solution.

Let $\Sigma$ denote the covariance of $(M,X,Y)$, ordered as $[M,X,Y]$. Because the BROJA optimization preserves the two target--source marginals, within the jointly Gaussian parameterization the only free covariance block is $\Sigma_{XY}$. From the fixed marginals we compute
\begin{equation}
  \Gamma_X = \Sigma_{M\mid X},
  \qquad
  \Gamma_Y = \Sigma_{M\mid Y},
\end{equation}
and obtain the Eigen-PID optimizer $\Gamma^\star$ from the generalized eigenproblem
\begin{equation}
  \Gamma_Y v_i = \lambda_i \Gamma_X v_i,
  \qquad
  V^\top \Gamma_X V = I,
\end{equation}
as
\begin{equation}
  \Gamma^\star
  =
  \Gamma_X V
  \operatorname{diag}\!\bigl(\min\{1,\lambda_i\}\bigr)
  V^\top \Gamma_X .
\end{equation}

To initialize the numerical methods at an attaining Gaussian coupling, define
\begin{equation}
  C^\star := \Sigma_M-\Gamma^\star .
\end{equation}
One compatible optimal source cross-covariance is
\begin{equation}
  \Sigma_{XY}^\star
  =
  \Sigma_{XM}(C^\star)^\dagger\Sigma_{MY},
  \label{eq:sigma-xy-star}
\end{equation}
where ${}^\dagger$ denotes the Moore--Penrose pseudoinverse. Replacing only the $(X,Y)$ and $(Y,X)$ blocks of $\Sigma$ by $\Sigma_{XY}^\star$ and $\Sigma_{XY}^{\star\top}$ yields a covariance $\Sigma^\star$ that preserves both target--source marginals and satisfies
\begin{equation}
  \Sigma_{M\mid X,Y}^\star=\Gamma^\star .
\end{equation}

Both numerical estimators operate on the cross-covariance of the source noises after conditioning on and whitening with respect to $M$. Writing
\begin{align}
  N_X & = \Sigma_{X\mid M}, &
  N_Y & = \Sigma_{Y\mid M},
\end{align}
the optimal initialization is therefore
\begin{equation}
  S^\star
  =
  N_X^{-1/2}
  \left(
  \Sigma_{XY}^\star
  -
  \Sigma_{XM}\Sigma_M^{-1}\Sigma_{MY}
  \right)
  N_Y^{-1/2}.
  \label{eq:optimal-whitened-initialization}
\end{equation}
$S^\star$ is created by inputting $\Sigma_{XY}^\star$ to Thin-PID or Tilde-PID and then using the built-in whitening function of Thin-PID or Tilde-PID. We use $S^\star$ in place of the default initialization of Thin-PID and Tilde-PID, while leaving their subsequent optimization procedures unchanged. In Tilde-PID, to compare the corresponding quantities, the regularization term (\(10^{-7}\)) was used during the optimization but omitted when evaluating the final union information. Thus, this experiment tests whether either numerical optimizer reduces union information below the Eigen-PID value.

\subsection{Software and Software Licenses}
\label[appendix]{sec:license}
\subsubsection{Eigen-PID}
The Eigen-PID Python package and code for reproducing our numerical experiments are available at \url{https://github.com/brainsandmachines/Eigen_PID}. Eigen-PID uses PyTorch for CPU and GPU computations.

\subsubsection{Baseline Implementations and Licenses}
We modified both baseline implementations to allow initialization at the Eigen-PID optimum for the experiment in \cref{sec:optimal-covariance-initialization}. For Tilde-PID, we also removed regularization from the final union-information evaluation, without changing the optimization procedure.

For Tilde-PID, we used the implementation released by Venkatesh et al. at \url{https://github.com/praveenv253/gpid}. The upstream package declares an MIT license.

For Thin-PID, we used the implementation released by Zhao et al. at \url{https://github.com/warrenzha/flow-pid}, based on commit \texttt{4a30b1a}. At the time of access, the public repository did not specify a software license.

\section{EIGEN-PID APPLIED TO VISUAL FMRI RESPONSES}
\label[appendix]{sec:fmri}

\begin{figure}[h]
  \centering
  \includegraphics{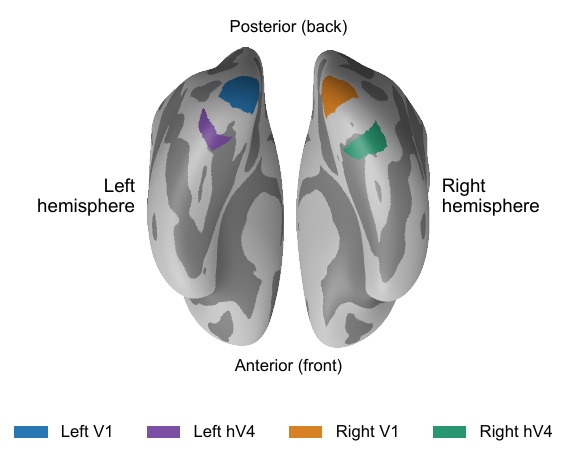}
  \caption{\textbf{Source and target regions.} Participant 1's V1 and hV4 regions are shown on inflated standard cortical surfaces (fsaverage), viewed from below and behind. Light and dark gray indicate cortical folds (gyri) and grooves (sulci). Both the left hemisphere and right hemisphere V1 regions are used as sources, with one of the hV4 regions (left or right) as the target.}
  \label{fig:fmri_cortex}
\end{figure}
Resampling-based bias correction of BROJA PID \citep{lorenz2026} requires repeated PID estimates, which are computationally expensive for high-dimensional data.  Here, we demonstrate the practical utility of Eigen-PID in substantially reducing computation time for this procedure using human functional magnetic resonance imaging (fMRI) responses to natural images.

\paragraph{Experimental setup.} We analyzed fMRI responses from the Natural Scenes Dataset (NSD), which contains fMRI measurements collected from eight participants as they viewed natural images \citepfmri{allen2022}. We used the Algonauts 2023 challenge training data \citepfmri{gifford2023}, consisting of preprocessed NSD responses at selected visually responsive cortical vertices. Each vertex is a sampling location on a reconstructed cortical surface. In the Algonauts data, responses were standardized within each scan session and averaged across image repetitions. The training data contained 8,779--9,841 distinct images and associated cortical responses per participant.

We selected primary visual cortex (V1) and human visual area V4 (hV4) in each hemisphere using the NSD-provided labels (\autoref{fig:fmri_cortex}), combining the dorsal and ventral subdivisions of V1. We removed vertices with identical response profiles across images within each region. For each image, we combined the responses across vertices within each region into a single vector. We performed two PID analyses per participant, using left and right V1 as sources in both, with left hV4 as the target in one analysis and right hV4 in the other. The three vectors for the same image formed one joint observation $(M,X,Y)$: $X$ and $Y$ represent left and right V1, and $M$ represents the chosen hV4 target. The image count $N$ and region dimensions $d_X$, $d_Y$, and $d_M$ varied across participants (\cref{tab:fmri_dimensions}), with a joint dimension $p=d_X+d_Y+d_M$ of 2,813--3,704.

\FloatBarrier

\begin{table}[!htb]
  \centering
  \caption{\textbf{Image counts and region dimensions.} Dimensions indicate the number of cortical vertices included in each region for each participant. Left and right V1 served as sources in all analyses; targets were either left hV4 or right hV4.}
  \label{tab:fmri_dimensions}
  \begin{tabular}{lrrrrrrrr}
    \toprule
    Participant                  & S1    & S2    & S3    & S4    & S5    & S6    & S7    & S8    \\
    \midrule
    Images ($N$)                 & 9,841 & 9,841 & 9,082 & 8,779 & 9,841 & 9,082 & 9,841 & 8,779 \\
    Left V1 vertices ($d_{X}$)   & 1,520 & 1,438 & 1,375 & 1,116 & 1,633 & 1,237 & 1,402 & 1,011 \\
    Right V1 vertices ($d_{Y}$)  & 1,426 & 1,287 & 1,293 & 1,159 & 1,266 & 1,251 & 1,142 & 1,384 \\
    Left hV4 vertices ($d_{M}$)  & 527   & 590   & 450   & 538   & 579   & 434   & 522   & 531   \\
    Right hV4 vertices ($d_{M}$) & 758   & 641   & 427   & 621   & 334   & 376   & 377   & 530   \\
    \bottomrule
  \end{tabular}
\end{table}

\FloatBarrier

\paragraph{Debiased PID estimates for real fMRI data.} For each participant and target, we estimated the joint sample covariance without shrinkage or dimensionality reduction and used Eigen-PID to compute the BROJA components of the fitted joint Gaussian: unique information $\operatorname{UI}_X$ and $\operatorname{UI}_Y$, redundancy $\operatorname{RI}$, and synergy $\operatorname{SI}$. We applied the merged correction of \citet{lorenz2026}, using our reimplementation (\texttt{debias\_eigen\_pid.py}), which uses Eigen-PID in place of Tilde-PID. We corrected the estimates of $I(M;X)$, $I(M;Y)$, and $I(M;X,Y)$ by subtracting their analytical estimation biases under independent, identically distributed Gaussian sampling. For union information $I_{\cup}^{\mathrm{BROJA}}(M;X,Y)$, we averaged two bias estimates:

\begin{enumerate}
  \item \textbf{Permutation estimate:} the mean union information across 20 random permutations of the target observations.
  \item \textbf{Parametric bootstrap estimate:} the mean union information estimated from 20 datasets sampled from the fitted Gaussian distribution, minus the union information computed from the original data without resampling.
\end{enumerate}

For each permuted or bootstrap dataset, we re-estimated the sample covariance and computed union information using Eigen-PID. We used 20 resamples of each type to match the implementation of \citet{lorenz2026}. After subtracting the averaged bias estimate from the uncorrected union-information estimate, we reconstructed the four PID components using the corrected information quantities, retaining negative estimates.

Redundancy was the largest mean debiased component for both targets (51.4 and 52.0 bits; \mbox{\autoref{fig:fmri_debiased}A--B}). When the target was left hV4, left V1 contributed more unique information than right V1 in seven of eight participants. When the target was right hV4, right V1 contributed more unique information than left V1 in all eight participants.

\begin{figure}[bthp]
  \centering
  \includegraphics{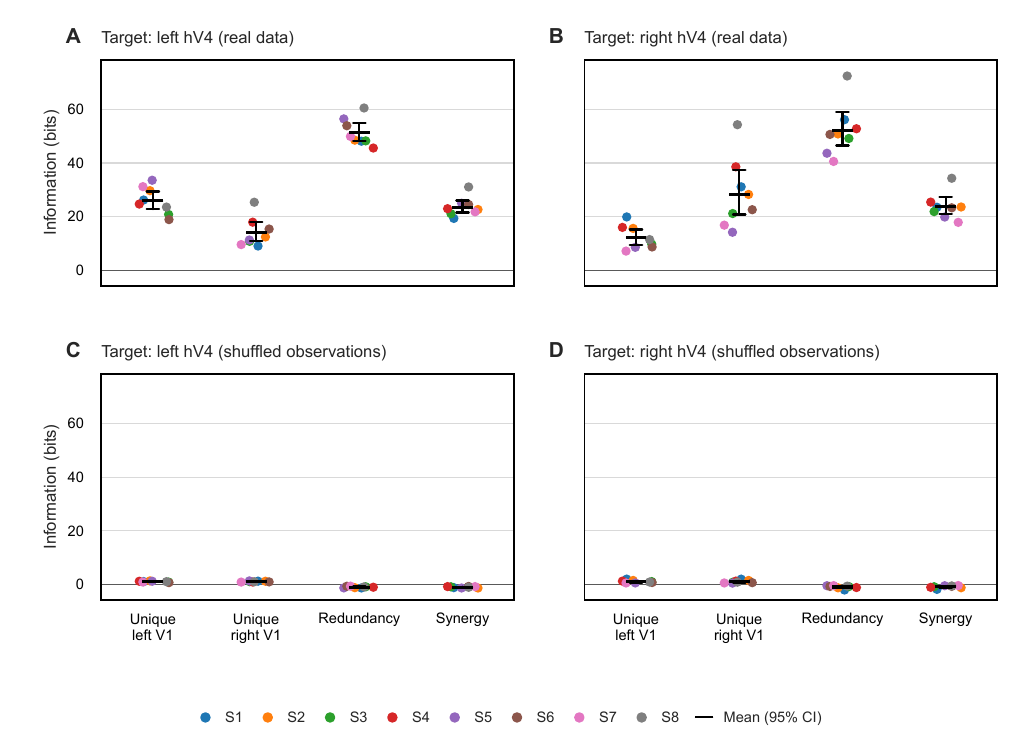}
  \caption{\textbf{Bias-corrected PID estimates.} All panels use the same scale in bits. Colors identify participants; black marks show means and 95\% percentile confidence intervals obtained by resampling the eight participants 10,000 times, independently of the bias-correction procedure. \textbf{(A)} Left hV4, bias-corrected PID of real fMRI responses. \textbf{(B)} Right hV4, bias-corrected PID of real fMRI responses. \textbf{(C)} Left hV4, bias-corrected PID of shuffled fMRI responses. \textbf{(D)} Right hV4, bias-corrected PID of shuffled fMRI responses. Bias correction preserves substantial information in real responses while leaving small residual component biases in shuffled controls.}
  \label{fig:fmri_debiased}
\end{figure}

\paragraph{Debiased PID estimates for shuffled controls.} To assess residual bias in the absence of target--source dependence, we created one shuffled-target control dataset per participant and target. Shuffling preserved the paired sources and within-region response structure while removing target--source correspondence. Applying the same bias correction as for the real data brought all four components close to zero (absolute mean estimates across participants ranged from 0.90 to 1.05 bits across components and targets; \mbox{\autoref{fig:fmri_debiased}C--D}). The residual component biases arose from undercorrection of union information by the parametric bootstrap. With permutation-only union correction (and the same analytical mutual-information correction), all mean components were within 0.11 bits of zero.

\FloatBarrier

\paragraph{PID estimates without bias correction.} We next examined the PID estimates without bias correction to assess the magnitude and distribution of finite-sample bias. In the real data, mean joint information was 242.2 bits for left hV4 and 240.7 bits for right hV4, compared with 115.1 and 116.3 bits after correction. Mean synergy was 73.4 and 72.2 bits, respectively (\mbox{\autoref{fig:fmri_raw}A--B}), compared with 23.6 and 23.7 bits after correction (\mbox{\autoref{fig:fmri_debiased}A--B}). Bias correction therefore changed both total information and its allocation among components. In the shuffled controls, all four uncorrected components were positive, with mean joint information of 127.2 bits for left hV4 and 124.6 bits for right hV4. Finite-sample bias was particularly pronounced for synergy, the largest mean component for both targets (\mbox{\autoref{fig:fmri_raw}C--D}).

\begin{figure}[!ht]
  \centering
  \includegraphics{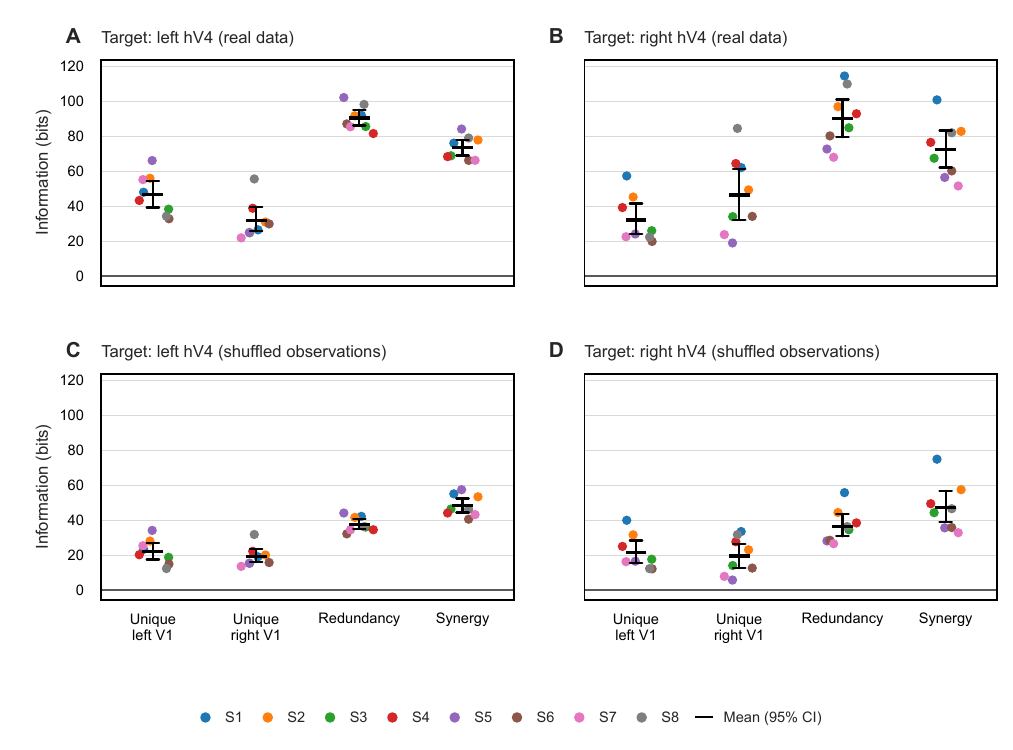}
  \caption{\textbf{Uncorrected PID estimates.} Colors and summary marks follow \autoref{fig:fmri_debiased}; all panels use a common bit scale. \textbf{(A)} Left hV4, uncorrected PID of real fMRI responses. \textbf{(B)} Right hV4, uncorrected PID of real fMRI responses. \textbf{(C)} Left hV4, uncorrected PID of shuffled fMRI responses. \textbf{(D)} Right hV4, uncorrected PID of shuffled fMRI responses. Shuffled controls reveal substantial finite-sample bias, particularly for synergy.}
  \label{fig:fmri_raw}
\end{figure}

\FloatBarrier

\paragraph{Scientific interpretation.} Our aim is to demonstrate the computational feasibility of the analysis rather than to draw neuroscientific conclusions. Such conclusions would require controlling for differences in dimensionality between regions; assessing departures from Gaussianity; and validating the bias correction at these sample sizes and dimensionalities. Furthermore, the analysis is observational and based on functional coactivation; assigning source and target roles does not establish causal direction or causal relationships. Nevertheless, the greater unique information contributed by V1 in the target's hemisphere is qualitatively consistent with the anatomical organization of primate visual pathways \citepfmri{ungerleider2008} and functional coactivation patterns within and across hemispheres \citepfmri{arcaro2015}.

\paragraph{Computational cost of bias correction.} A complete analysis across both targets and all eight participants (reproducing \mbox{\autoref{fig:fmri_debiased}A--B}) required 656 PID evaluations: one original estimate, 20 permutations, and 20 parametric bootstraps per participant and target. We ran the analyses on a workstation with a 20-core Intel Core Ultra 7 265K processor and 128\,GiB of RAM, using CPU float64 computation with 20 threads. During runtime measurements, we turned off optional numerical diagnostics, which validate the numerical conditions of the inputs without altering the PID computation. \Cref{tab:fmri_runtime} reports the computation times.

\begin{table}[!ht]
  \centering
  \caption{\textbf{Measured runtimes and projected correction costs.} Single-call times exclude covariance fitting and represent averages across all eight participants for both methods. Complete-analysis times cover both estimation and bias correction for all eight participants per target.}
  \label{tab:fmri_runtime}
  \begin{threeparttable}
    \begin{tabular}{lrrrr}
      \toprule
      \multirow{2}{*}{\textbf{Target}} & \multicolumn{2}{c}{\textbf{Single call} (seconds)} & \multicolumn{2}{c}{\textbf{Complete analysis} (minutes)}                                       \\
                                       & Eigen-PID                                          & Thin-PID                                                 & Eigen-PID & Thin-PID                \\
      \midrule
      Left hV4                         & 0.92                                               & 2,645.38                                                 & 7.20      & 14,461.43{\tnote{\dag}} \\
      Right hV4                        & 0.29                                               & 2,835.40                                                 & 7.29      & 15,500.20{\tnote{\dag}} \\
      \bottomrule
    \end{tabular}
    \begin{tablenotes}[flushleft]
      \footnotesize
      \item[\dag] Conservative projections obtained by multiplying the mean single-call time across all eight participants by 328, excluding the time to generate resampled datasets and estimate their covariances.
    \end{tablenotes}
  \end{threeparttable}
\end{table}

\FloatBarrier

Eigen-PID completed the analysis for both targets across all eight participants in 14.5 minutes, compared with a projected completion time of approximately 21 days for Thin-PID (extrapolated from one timed PID evaluation per participant and target). Moving the Eigen-PID computation from the CPU to an NVIDIA GeForce RTX 5090 GPU reduced the total runtime from 14.5 to 3.9 minutes. By eliminating iterative optimization of the PID objective, Eigen-PID makes resampling-based bias correction practical even for high-dimensional data.

\FloatBarrier

\bibliographystylefmri{plainnat-doifirst}
\bibliographyfmri{references}

\end{document}